\title{Simulating Random Walks on Graphs in the Streaming Model}
\author{Ce Jin}{Institute for Interdisciplinary Information Sciences, Tsinghua University, Beijing, China}{jinc16@mails.tsinghua.edu.cn}{}{}
\authorrunning{C.\,Jin}
\subjclass{
\ccsdesc[100]{Theory of computation~Streaming models}
}
\keywords{streaming models, random walks, sampling}
\newcommand{\E}{\mathbb{E}}
\newcommand{\R}{\mathbb{R}}
\newcommand{\1}{\mathbf{1}}
\newcommand{\eps}{\varepsilon}
\renewcommand{\P}{\mathbb{P}}
\begin{document}

\maketitle

\begin{abstract}
We study the problem of approximately simulating a $t$-step random walk on a graph where the input edges come from a single-pass stream.
The straightforward algorithm using reservoir sampling needs $O(nt)$ words of memory. We show that this space complexity is near-optimal for directed graphs. 
For undirected graphs, we prove an $\Omega(n\sqrt{t})$-bit space lower bound, and give a near-optimal algorithm using $O(n\sqrt{t})$
words of space with $2^{-\Omega(\sqrt{t})}$ simulation error  (defined as the $\ell_1$-distance between the output distribution of the simulation algorithm and the distribution of perfect random walks). 
We also discuss extending the algorithms to the turnstile model, where both insertion and deletion of edges can appear in the input stream.
 \end{abstract}

\section{Introduction}
Graphs of massive size are used for modeling complex systems that emerge in many different fields of study. Challenges arise when computing with massive graphs under memory constraints. In recent years, graph streaming has become  an important model for computation on massive graphs. Many space-efficient streaming algorithms   have been designed for solving classical graph problems, including connectivity \cite{ahn2012analyzing}, bipartiteness \cite{ahn2012analyzing}, minimum spanning tree \cite{ahn2012analyzing}, matching \cite{epstein2011improved, kapralov2013better,ahn2013linear}, spectral sparsifiers \cite{kelner2013spectral, kapralov2017single}, etc. We will define the streaming model in Section~\ref{introstreaming}.

Random walks on graphs are stochastic processes that have many applications, such as connectivity testing \cite{reingold2008undirected}, clustering \cite{spielman2013local,andersen2007using, andersen2009finding, charikar2003better}, sampling \cite{jerrum1986random} and approximate counting \cite{jerrum1989approximating}.
Since random walks are a powerful tool in algorithm design, it is interesting to study them in the streaming setting.
A natural problem is to find the space complexity of simulating random walks in  graph streams. 
Das~ Sarma~et~al.~\cite{sarma2011estimating}   gave a multi-pass streaming algorithm that simulates a $t$-step random walk on a directed graph using $O(\sqrt{t})$ passes and only $O(n)$ space. By further extending this algorithm and combining with other ideas, they obtained space-efficient algorithms for estimating PageRank on graph streams.
However, their techniques crucially rely on reading  multiple passes of the input stream. 

In this paper, we study the problem of simulating random walks in the \textit{one-pass} streaming model. We show space lower bounds for both directed and undirected versions of the problem, and present algorithms that  nearly match with the lower bounds. We summarize our results in Section~\ref{secres}.

\subsection{One-pass streaming model}
\label{introstreaming}
Let $G=(V,E)$ be a graph with $n$ vertices.
In the insertion-only model, the input graph $G$ is defined by a stream of edges $(e_1,\dots,e_m)$ seen in arbitrary order, where each edge $e_i$ is specified by its two endpoints $u_i,v_i \in V$.
An algorithm must process the edges of $G$ in the order that they appear in the input stream.  
The edges can be directed or undirected, depending on the problem setting.
Sometimes we allow multiple edges in the graph, where the multiplicity of an edge equals its number of occurrences in the input stream.

In the turnstile model, we allow both insertion and deletion of edges. The input is a stream of updates $((e_1,\Delta_1),(e_2,\Delta_2),\dots)$, where $e_i$ encodes an edge and $\Delta_i \in \{1,-1\}$. The multiplicity of edge $e$ is $f(e)=\sum_{e_i=e}\Delta_i$. We assume $f(e) \ge 0$ always holds for every edge $e$.
\subsection{Random walks}
Let $f(u,v)$ denote the multiplicity of edge $(u,v)$. 
The degree of $u$ is defined by $d(u) =\sum_{v\in V}f(u,v)$.
A $t$-step random walk starting from a vertex $s \in V$ is a random sequence of vertices $v_0,v_1,\dots,v_t$ where $v_0=s$ and $v_i$ is a vertex uniformly randomly chosen from the vertices that $v_{i-1}$ connects to, i.e., $\P[v_i = v|v_{i-1} = u] = f(u,v)/d(u)$.
 Let $\mathcal{RW}_{s,t} : V^{t+1} \to [0,1]$ denote the distribution of $t$-step random walks starting from $s$, defined by\footnote{For a statement $p$, define $\1[p] = 1$ if $p$ is true, and $\1[p]=0$ if $p$ is false. }
\begin{equation}
 \mathcal{RW}_{s,t}(v_0,\dots,v_t) = \1[v_0=s] \prod_{i=0}^{t-1} \frac{f(v_i,v_{i+1})}{d(v_i)}.
\end{equation}
For two distributions $P,Q$, we denote by $|P-Q|_1$ their $\ell_1$ distance. 
 We say that a randomized algorithm can simulate a $t$-step random walk starting from $v_0$ within error $\eps$, if the distribution  $\P_w$ of its output $w\in V^{t+1}$  satisfies $|\P_w - \mathcal{RW}_{v_0,t}|_1 \le \eps$. We say the random walk simulation is perfect if $\eps=0$.

We study the problem of simulating a $t$-step random walk within error $\eps$ in the streaming model using small space.
We assume the length $t$ is specified at the beginning.
Then the algorithm reads the input stream.
When a query with parameter $v_0$ comes, the algorithm should simulate and output a $t$-step random walk starting from vertex $v_0$.

It is without loss of generality to assume that the input graph has no self-loops.  If we can simulate a random walk on the graph with self-loops removed, we can then turn it into a random walk of the original graph by simply inserting self-loops after $u$ with probability $d_{\text{self}}(u)/d(u)$. The values $d_{\text{self}}(u),d(u)$ can be easily maintained by a streaming algorithm using $O(n)$ words.

The random walk is not well-defined when it starts from a vertex $u$ with $d(u)=0$. For undirected graphs, this can only happen at the beginning of the random walk, and we simply let our algorithm return  \textsc{Fail} if $d(v_0)=0$. For directed graphs, one way to fix this is to continue the random walk from $v_0$, by adding an edge $(u,v_0)$ for every vertex $u$ with $d(u)=0$.
We will not deal with $d(u)=0$ in the following discussion.

\subsection{Our results}
\label{secres}

We will use $\log x = \log_2 x$ throughout this paper.

The following two theorems give space lower bounds on directed and undirected versions of the problem. 
Note that the lower bounds hold even for simple graphs\footnote{A \textit{simple} graph is a graph with no multiple edges.}.
\begin{theorem}
For $t \le n/2$, simulating a $t$-step random walk on a simple directed graph in the insertion-only model within error $\eps = \frac{1}{3}$ requires $\Omega(nt \log (n/t))$ bits of memory.
\end{theorem}
\begin{theorem}
For $t = O(n^2)$, simulating a $t$-step random walk on a simple undirected graph in the insertion-only model within error $\eps=\frac{1}{3}$ requires $\Omega(n\sqrt{t})$ bits of memory.
\end{theorem}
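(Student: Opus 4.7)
My plan is an information-theoretic Fano-style reduction: I exhibit a high-entropy distribution on input graphs and show that a random-walk decoder applied to the algorithm's output recovers the graph well enough that the algorithm's memory must retain $\Omega(n\sqrt t)$ bits of information.

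\emph{Hard distribution.} Partition the $n$ vertices into $k:=\lfloor n/\sqrt t\rfloor$ disjoint clusters $C_1,\dots,C_k$ of size $\sqrt t$, and independently draw each $C_j$ as a sample of the Erd\H{o}s--R\'enyi graph $G(\sqrt t,1/2)$. Fix any anchor vertex $v^{(j)}\in C_j$, and feed the chosen edges into the stream in arbitrary order. The graph is simple and has entropy $H(G)=k\binom{\sqrt t}{2}=\Theta(n\sqrt t)$ bits. Writing $S$ for the algorithm's memory after the stream, it suffices to prove $|S|\ge I(G;S)=\Omega(n\sqrt t)$.

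\emph{Decoder and cluster coverage.} For each $j$, query the algorithm with starting vertex $v^{(j)}$, obtain an output walk $O_j\in V^{t+1}$, and let $\hat E_j$ be the set of consecutive vertex-pairs appearing in $O_j$. Because the clusters are disjoint components, any perfect walk $O_j^\star$ from $v^{(j)}$ is confined to $C_j$. A typical sample of $G(\sqrt t,1/2)$ is an expander with spectral gap $\Omega(1)$ and $m_j=\Theta(t)$ edges, so $O_j^\star$ mixes in $O(\log\sqrt t)$ steps and thereafter traverses each present edge an expected $t/m_j=\Theta(1)$ times. Concentration (for instance via negative correlation of edge-visit indicators) bounds the probability of missing any specific present edge by a universal constant $c_0<1$. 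Using the maximal coupling guaranteed by $|P_{O_j}-\mathcal{RW}_{v^{(j)},t}|_1\le\eps=1/3$, we have $\Pr[O_j\neq O_j^\star]\le\eps/2=1/6$, so the decoded bit $\hat b_e:=\1[e\in \hat E_j]$ matches the true indicator $b_e:=\1[e\in E_j]$ with probability at least $1-c_1$, where $c_1:=1/6+c_0/2<1/2$.

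\emph{Fano and summation.} Binary Fano plus subadditivity of conditional entropy gives $H(E_j\mid \hat E_j)\le\binom{\sqrt t}{2}H_2(c_1)+O(1)$, and hence $I(E_j;\hat E_j)\ge(1-H_2(c_1))\binom{\sqrt t}{2}-O(1)=\Omega(t)$. Since the $E_j$'s are mutually independent and each $\hat E_j$ depends only on $S$ together with independent query randomness, the chain rule for mutual information gives
\begin{equation*}
|S|\;\ge\; I(G;S)\;\ge\;\sum_{j=1}^k I(E_j;S)\;\ge\;\sum_{j=1}^k I(E_j;\hat E_j)\;=\;\Omega(k\,t)\;=\;\Omega(n\sqrt t).
\end{equation*}

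\emph{Main obstacle.} The technical crux is the coverage estimate inside a single cluster: proving quantitatively that a $t$-step walk on a typical $G(\sqrt t,1/2)$ misses each present edge with probability at most a small enough constant $c_0$, so that the aggregate per-bit error $c_1=1/6+c_0/2$ stays strictly below $1/2$ under $\eps=1/3$. This calls for rapid mixing combined with Poisson-like concentration of edge traversals; once that is in place, the remaining information-theoretic machinery (Fano, chain rule, coupling) is standard. A minor secondary point is the regime where $\sqrt t$ is a tiny constant, which can be handled separately by a direct per-vertex gadget that already yields the matching $\Omega(n)$ bound.
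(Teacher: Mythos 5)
Your plan is conceptually related to the paper's (both ultimately rest on recovering $\Omega(n\sqrt t)$ bits of edge information from the algorithm's output walk), but it diverges at the crucial point and the divergence opens a genuine gap.

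\textbf{Where the routes differ.} The paper reduces from \textsc{Index}: Alice encodes, for each block $V_j$ of $2\sqrt t$ vertices (split into $A_j,B_j$ of size $\sqrt t$), a bipartite pattern between $A_j$ and $B_j$, and then the \emph{decoder adds its own edges} connecting $A_j$ to $V_0$ before querying. Because the decoder is simulating the streaming algorithm, it may legitimately continue feeding the stream, and this auxiliary structure forces the walk to probe one target pair $(a,b)$ at a rate $\Omega(1/t)$ per four steps, so the coverage analysis is a two-line calculation. You instead use a passive decoder: the input is a disjoint union of $G(\sqrt t,1/2)$ clusters, and the decoder merely queries and reads off the traversed edges. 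This forces you to show that a $t$-step walk on a typical $G(\sqrt t,1/2)$ covers any given present edge with probability at least some explicit constant, which is a much stronger statement than the single-edge probing bound the paper needs.

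\textbf{The gap.} You identify the coverage estimate as the ``main obstacle'' and then do not prove it; as stated, the proposal is incomplete. The specific claims offered in its support are either insufficient or suspect. Rapid mixing plus $\E[\text{traversals of }e]=\Theta(1)$ gives expectation, not a bound on $\P[\text{never traverse }e]$: with only a constant expected number of visits, one needs genuine concentration, and the outcome depends on constants (you correctly note $c_0$ must be below $2/3$, not merely below $1$). The ``negative correlation of edge-visit indicators'' is not obviously true --- a walk that has just traversed $e$ is \emph{more} likely to traverse it again soon, so locally the indicators are positively correlated, which is exactly the regime where a naive second-moment argument can fail. Making this rigorous would require a mixing-time decomposition with quantified decorrelation plus a Paley--Zygmund or exponential-moment bound with tracked constants, none of which is written down. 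By contrast, the paper's decoder-adds-edges trick makes the walk's behavior deterministic in structure (alternate between $A_j$ and $B_j\cup V_0$), reducing the coverage claim to a one-step conditional probability bound $\ge 2/(9t)$ and eliminating the hard part of your argument entirely. If you want to keep the information-theoretic framing, the fix is to let your decoder append auxiliary edges (and dedicate a small set of anchor vertices $V_0$, as the paper does) before querying, after which your Fano machinery goes through cleanly.
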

Theorem~\ref{maindirected} and Theorem~\ref{mainundirected} give near optimal space upper bounds  for the problem in the insertion-only streaming model. 
\begin{theorem}
\label{maindirected}
 We can simulate a $t$-step random walk on a directed graph in the insertion-only model perfectly using $O(nt)$ words\footnote{A \textit{word} has $\Theta(\log \max\{n,m\})$ bits.} of memory. For simple directed graphs, the memory can be reduced to $O(nt\log (n/t))$ bits, assuming $t \le n/2$.
\end{theorem}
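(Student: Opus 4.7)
The plan is to maintain, for each vertex $u$, a reservoir of $t$ independent uniform samples of $u$'s outgoing edges, and use these samples to drive the walk; this gives a perfect simulation, and together with a compact encoding also yields the refined simple-graph bound.

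For each vertex $u$ I keep the running out-degree $d_u$ and a multiset $M_u$ of $t$ target vertices representing the current reservoir samples (with the slot labels discarded). When edge $(u,v)$ arrives, I increment $d_u$; if $d_u=1$ I set $M_u$ to $t$ copies of $v$; otherwise I draw $X\sim\mathrm{Bin}(t,1/d_u)$, pick a uniformly random size-$X$ sub-multiset of $M_u$, delete it, and insert $X$ copies of $v$. This is $t$ parallel instances of Vitter's single-item reservoir sampler with the $t$ independent Bernoulli$(1/d_u)$ replacements aggregated via exchangeability. The invariant is that the law of $M_u$ equals that of $\{Y_1,\dots,Y_t\}$, where $Y_1,\dots,Y_t$ are i.i.d.\ uniform over the currently seen out-edges of $u$ (multi-edges are handled transparently since each stream occurrence is a separate sampling item). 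On query $v_0$, I simulate the walk for $t$ steps: at each visit to a vertex $u$, I pop one uniformly random element from the remaining $M_u$ and take it as the next vertex. Because only $t$ transitions occur in total, no $M_u$ is ever exhausted; correctness uses the distributional identity that $t$ i.i.d.\ draws are equivalent to (i) sampling their multiset, then (ii) revealing it in uniformly random order.

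For the space bounds: storing each $M_u$ naively uses $O(t\log n)=O(t)$ words, giving the $O(nt)$-word bound. For simple graphs with $t\le n/2$, $M_u$ is a size-$t$ multiset drawn from $V\setminus\{u\}$, so there are $\binom{n+t-2}{t}\le (e(n+t)/t)^t$ possibilities; encoding the sorted multiset (e.g.\ by Elias--Fano coding after the injection $a_i\mapsto a_i+i$, or by arithmetic coding into $\lceil\log\binom{n+t-2}{t}\rceil$ bits) uses $O(t\log(n/t))$ bits per vertex. Re-encoding after each update keeps this bound invariant, and adding $O(n\log n)$ bits for the counters $d_u$ is absorbed, yielding $O(nt\log(n/t))$ bits total.

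The main subtlety is justifying the multiset-only update: once slot labels are forgotten, one must argue that the distribution of $M_u$ is the correct multiset marginal of the true $t$-reservoir. I would prove this by coupling with the labeled version---run $t$ labeled reservoirs and then forget the labels. The per-edge rule (aggregate Bernoulli count $X$, uniformly random sub-multiset) is exactly the marginal of the labeled rule on the multiset, and the random-pop reveal at query time is the marginal of reading labeled slots in a uniformly random order. Given this invariant, perfect simulation and both space bounds follow routinely.
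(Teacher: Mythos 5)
Your proposal is correct; it matches the paper's proof exactly for the $O(nt)$-word bound (both maintain $t$ with-replacement reservoir samples per vertex and step along them), but for the refined simple-graph bit bound you take a genuinely different route. The paper switches schemes for the second bound: it runs a capacity-$t$ reservoir sampler \emph{without replacement} per vertex, encodes the resulting size-$\le t$ subset of out-neighbors in $\lceil\log\sum_{i\le t}\binom{n}{i}\rceil = O(t\log(n/t))$ bits, and at query time chooses a previously used edge with probability $d_{\text{used}}(u)/d(u)$ and an unused reservoir edge otherwise. You instead keep the with-replacement reservoir for both bounds, but forget the slot labels and store only the size-$t$ multiset $M_u$, with the aggregated update $X\sim\mathrm{Bin}(t,1/d_u)$ and a uniformly random size-$X$ sub-multiset deletion, then reveal by random pops. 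Your version is more unified (one scheme for both space regimes) but shifts the burden onto the marginalization lemma---that forgetting labels and the aggregated Binomial-plus-hypergeometric update yields exactly the multiset marginal of $t$ independent capacity-1 reservoirs, and that a uniformly random ordering of $t$ i.i.d.\ draws is again i.i.d. Your sketch of that lemma (exchangeability of the labeled state given the multiset, hence the removed positions having the multivariate hypergeometric law) is sound; just make sure ``uniformly random size-$X$ sub-multiset'' is read as choosing $X$ of the $t$ slots (equivalently, hypergeometrically), not uniformly over distinct sub-multisets, which would be wrong. The paper's without-replacement approach avoids this lemma entirely, at the cost of the re-use bookkeeping at query time; the two are comparable in difficulty and both give the stated $O(nt\log(n/t))$-bit bound.
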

\begin{theorem}
\label{mainundirected}
We can simulate a $t$-step random walk on an undirected graph in the insertion-only model within error $\eps$ using $O\left ( n \sqrt{t} \cdot \frac{q}{\log q} \right )$ 
words of memory, where $q = 2+\frac{\log(1/\eps)}{\sqrt{t}}$.  In particular, the algorithm uses $O(n\sqrt{t})$ words of memory when $\eps =   2^{-\Theta\left (\sqrt{t}\right )}$.
\end{theorem}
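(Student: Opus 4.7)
Proof plan. The plan is to maintain, during the stream, a reservoir-sampled list of random neighbors at each vertex, and then in the query phase walk one step at a time, consuming one fresh sample per visit; failure is declared only if the list at some vertex is exhausted, and this failure probability is bounded via a concentration inequality on the visit counts of the walk.

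Concretely, I would set $k = \Theta(\sqrt{t}\,q/\log q)$ and, for each vertex $v$, maintain a list $S_v$ of $k_v$ slots, each an independent reservoir sampler over the edges incident to $v$. To make the analysis go through on graphs with a highly non-uniform stationary distribution (for example a large star, where a single vertex would otherwise receive $\Theta(t)$ visits), the size $k_v$ has to be calibrated to the vertex: I would take $k_v = \min\bigl(d(v),\,k + C\cdot t\,d(v)/(2m)\bigr)$ for a sufficiently large constant $C$, so that $k_v$ is always at least a constant multiple of $\E[N_v(t)] = t\,d(v)/(2m)$ plus a $\sqrt{t}$-scale safety margin. Using $t = O(n^2)$ and $\sum_v d(v) = 2m$, the total storage is
\[
\sum_v k_v \;\le\; nk + Ct \;=\; O(nk) \;=\; O\!\left(n\sqrt{t}\cdot\tfrac{q}{\log q}\right),
\]
matching the theorem.

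Given the $S_v$'s, the simulation is direct: walk one vertex at a time and, at the $c$-th visit to $v$, output the $c$-th sample of $S_v$ as the next vertex; declare \textsc{Fail} if $c$ ever exceeds $k_v$. Since the $k_v$ entries of $S_v$ are mutually independent uniformly-random neighbors of $v$, the first $k_v$ visits to $v$ yield iid uniform samples, so conditioned on the non-failure event the output is distributed exactly as $\mathcal{RW}_{v_0,t}$; the $\ell_1$ error is therefore at most $2\,\P[\textsc{Fail}]$.

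The main step, and the main obstacle, is to prove $\P[\textsc{Fail}] = \P[\exists v \colon N_v(t) > k_v] \le \eps$, where $N_v(t)$ is the number of visits to $v$ in a true $t$-step walk. By our choice of $k_v$ this amounts to a Chernoff-style tail bound of the form $\P[N_v(t) - \E N_v(t) > \lambda\sqrt{t}] \le 2^{-\Omega(\lambda \log \lambda)}$ with $\lambda = q/\log q$; combined with a union bound over the at most $t+1$ distinct vertices visited this yields failure probability $2^{-\Omega(q)} \le \eps$. A plain Chebyshev estimate gives only polynomial decay and is insufficient, while appealing to the spectral gap of the chain is also inadequate since the gap can be arbitrarily small. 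The natural route is to decompose the walk into excursions away from $v$, use the reversibility of the undirected walk to argue that the excursion-length distribution has a well-controlled moment generating function, and then apply a Chernoff bound to the resulting sum. Establishing the required excursion-length tail uniformly over graphs is what I expect to be the delicate part of the proof.
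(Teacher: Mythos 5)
Your high-level framework --- sample neighbors during the stream, simulate from the samples, declare failure when a vertex's samples are exhausted, and bound the failure probability by a per-vertex tail bound plus a union bound over visited vertices --- does match the paper, and so does your instinct that reversibility of the undirected walk should be the key structural fact. The gap is that the tail bound you need on the raw visit count $N_v(t)$ is simply not true.

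The claimed bound $\P[N_v(t) - \E N_v(t) > \lambda\sqrt{t}] \le 2^{-\Omega(\lambda\log\lambda)}$ already fails on a simple graph. Let $v$ be adjacent to $\sqrt{t}$ degree-one leaves and to one vertex $w$ of a clique on $\sqrt{t}$ vertices, and start the walk at $v$. Each time the walk is at $v$ it bounces to a leaf and back with probability $1-\Theta(1/\sqrt{t})$, and escapes to the clique (essentially never returning within $t$ steps) with probability $\Theta(1/\sqrt{t})$. So $N_v(t)$ is close to geometric with mean $\Theta(\sqrt{t})$, giving $\P[N_v(t) > \lambda\sqrt{t}] = e^{-\Theta(\lambda)}$: a merely exponential tail, not the super-exponential one you need. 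With a geometric tail, pushing the per-vertex failure probability below $\eps/t$ forces $k_v = \Omega(\sqrt{t}\log(t/\eps))$, which is $\Omega(\sqrt{t}\log t)$ even for constant $\eps$, whereas the theorem promises $O(\sqrt{t})$ per vertex. The excursion-MGF route you sketch cannot rescue this: in the example the single escape excursion has length $\Theta(t)$, so no useful uniform MGF bound on excursion lengths exists. There is also a secondary issue that $\E N_v(t)$ need not be close to $t\,d(v)/(2m)$ when the walk is far from stationarity (consider a disconnected graph, or a long path started at one end), so your sizing $k_v = k + Ct\,d(v)/(2m)$ is not even correctly centered.

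The paper circumvents all of this by never bounding $N_v(t)$ at all. It calls an arc $(u,v)$ \emph{important} if $d(v)\le C$, stores \emph{all} important arcs (there are only $O(nC)$ of them since each small vertex contributes at most $C$), and for each $u$ samples only $C$ unimportant arcs, i.e., steps into high-degree neighbors. A vertex $u$ fails only when more than $C$ of its visits are followed by a step to a big vertex. Reversibility enters precisely here: reversing a closed walk at $u$ turns a step $u\to(\text{big vertex})$ into a step $(\text{big vertex})\to u$, which has probability less than $1/C$ because the big vertex has degree greater than $C$. The count of bad steps is therefore dominated by a Binomial$(t,1/C)$, yielding the tail $(et/C^2)^C$ --- super-exponential in $C$, exactly what the plain visit count cannot deliver. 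In the star-plus-clique example above, nearly every visit to $v$ is followed by a step to a leaf, an important arc stored exactly, so only one bad step ever occurs. Restricting the failure criterion to steps into high-degree neighbors, rather than all visits, is the missing idea.
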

Our algorithms also extend to the turnstile model. 
\begin{theorem}
  We can simulate a $t$-step random walk on a directed graph in the turnstile model within error $\eps$ using $O(n(t+ \log\frac{1}{\eps})\log^2 \max\{n,1/\eps\})$ bits of memory.
\end{theorem}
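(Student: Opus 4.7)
The plan is to adapt the insertion-only algorithm---whose core idea is that each vertex keeps $t$ independent uniform samples of its outgoing edges via reservoir sampling---by replacing the reservoir sampler with a turnstile-compatible primitive: the perfect $\ell_1$-sampler of Jowhari--Saglam--Tardos. For a vector $x$ maintained by a $\pm 1$ stream, such a sampler returns, with some constant failure probability $\delta_0$, an index $i$ distributed exactly as $|x_i|/\|x\|_1$, and can be implemented in $O(\log^2 \max\{n,1/\eps\})$ bits. For each vertex $v$ I would keep a bank of $r = O(t + \log(1/\eps))$ such independent samplers tracking the outgoing-multiplicity vector $(f(v,u))_{u\in V}$; every stream update on an outgoing edge of $v$ is forwarded to all $r$ of $v$'s samplers, with all samplers across the entire data structure using independent random bits.

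At query time from $v_0$, I would simulate the walk greedily: at each step, at the current vertex $u$, query the unused samplers of $u$ one by one, discarding FAIL outputs, until one succeeds and names the next vertex; if $u$ exhausts its samplers the algorithm outputs FAIL. Because the samplers use mutually independent random bits and each successful sampler returns an \emph{exact} sample from $f(u,\cdot)/d(u)$, the output conditional on never returning FAIL is distributed exactly as $\mathcal{RW}_{v_0,t}$, so the overall $\ell_1$-error equals the probability that some vertex runs out of samplers. To bound this, I would apply a negative-binomial tail bound: if the walk visits $v$ exactly $k \le t$ times, then the number of $v$-samplers it consumes is a sum of $k$ i.i.d.\ Geometric$(1-\delta_0)$ variables, which exceeds $k + \Theta(\log(n/\eps))$ with probability at most $\eps/n$ by Chernoff; a union bound over the $n$ vertices gives total FAIL probability at most $\eps$. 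The total storage is $n \cdot r \cdot O(\log^2\max\{n,1/\eps\}) = O\bigl(n(t+\log(1/\eps))\log^2\max\{n,1/\eps\}\bigr)$ bits, matching the theorem.

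The subtlety I would be most careful about is that the ``retry-on-FAIL'' loop does not bias the transition distribution. This relies on the samplers being independent as random variables, not merely having independent construction: the output of a fresh sampler must be independent of the FAIL events of its predecessors, which follows from each sampler's random bits being drawn separately. A secondary point worth flagging is that $\ell_1$-sampling rather than $\ell_0$-sampling is required, since the walk's transition at $u$ is proportional to edge multiplicities $f(u,v)$ and not to the support of $u$'s outgoing edges; this is immediate from the choice of primitive but easy to overlook.
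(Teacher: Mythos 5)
Your approach is essentially the paper's: per vertex, maintain a bank of independent constant-failure $\ell_1$-samplers over the outgoing-multiplicity vector, simulate the walk by consuming samples, and bound the probability that some visited vertex exhausts its bank. Your lazy retry-on-FAIL loop is a reformulation of the paper's pre-selection of at least $t$ successful samplers per vertex, and the independence decomposition you flag (each sampler's success bit independent of its conditional-on-success output) is exactly what both versions of the argument rely on.

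Where you diverge, slightly to your detriment, is the union bound. You union over all $n$ vertices, so you need per-vertex failure probability at most $\eps/n$, which by your own tail estimate forces $r = \Theta(t + \log(n/\eps))$ rather than the $r = O(t + \log\frac{1}{\eps})$ you announced; that yields $O\bigl(n(t+\log n+\log\frac{1}{\eps})\log^2\max\{n,1/\eps\}\bigr)$ bits, exceeding the claimed bound by an additive $n\log n\cdot\log^2\max\{n,1/\eps\}$ when $t \ll \log n$ and $\eps$ is constant. The paper avoids this with the same idea as Lemma~\ref{union}: since $\sum_{u}\P[u\in\{v_0,\dots,v_{t-1}\}] = \E[\text{number of distinct visited vertices}] \le t$, a per-vertex failure probability of $\eps/(2t)$ suffices, so $C' = \Theta(t+\log(t/\eps)) = O(t+\log\frac{1}{\eps})$. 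Two smaller slips worth fixing: a sum of $k$ Geometric$(1-\delta_0)$ variables has mean $k/(1-\delta_0)$, so for $\delta_0=1/2$ the threshold should be $2k+\Theta(\cdot)$, not $k+\Theta(\cdot)$; and the claim that the output conditioned on no FAIL is exactly $\mathcal{RW}_{v_0,t}$ is not quite right, since conditioning on no FAIL biases toward walks that revisit vertices less often --- what holds (and is all you need) is $\|\P_{\text{output}}-\mathcal{RW}_{v_0,t}\|_1 \le 2\,\P[\text{FAIL}]$.
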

\begin{theorem}
  We can simulate a $t$-step random walk on an undirected graph in the turnstile model within error $\eps$ using $O(n(\sqrt{t}+ \log\frac{1}{\eps})\log^2 \max\{n,1/\eps\})$ bits of memory.
\end{theorem}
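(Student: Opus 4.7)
The plan is to port the insertion-only undirected algorithm of Theorem~\ref{mainundirected} to the turnstile setting by replacing its reservoir samples with $\ell_1$-samplers. Recall that the insertion-only algorithm maintains, for each vertex $v$, a pool of $k = O(\sqrt{t}\cdot q/\log q)$ pre-computed independent samples drawn from the neighbor distribution $u \mapsto f(v,u)/d(v)$; these samples are consumed as the per-step randomness when simulating the walk, and the analysis shows that, except with probability at most $\eps$, no vertex is visited more times than its pool contains.

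In the turnstile model reservoir sampling is unavailable, so I would replace each pre-computed sample by an independent $\ell_1$-sampler on the vector $(f(v,u))_{u \in V}$ of incident-edge multiplicities. A standard $\ell_1$-sampler (e.g.\ Jowhari--Sa\u{g}lam--Tardos) with failure probability $\delta$ occupies $O(\log^2(n/\delta))$ bits and, conditional on success, emits an index distributed exactly as $f/\|f\|_1$, and is moreover linear in the update stream so it handles insertions and deletions uniformly. For each vertex I would maintain $k' = O(\sqrt{t}+\log(1/\eps))$ such independent samplers, chosen large enough to dominate the insertion-only $k$ and to leave slack for the upcoming union bound.

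Setting the per-sampler failure probability to $\delta = \eps/(nk')$ and union-bounding over all $nk'$ samplers, the probability that any sampler fails is at most $\eps$. Conditional on every sampler succeeding, the delivered samples are mutually independent and distributed exactly as in the insertion-only algorithm, so the simulation error is bounded by the $\eps$ from Theorem~\ref{mainundirected} plus the union-bound $\eps$, and the resulting $O(\eps)$ error can be absorbed by rescaling. The total memory footprint is
\[
 O\!\left(n k' \log^2(n/\delta)\right) \;=\; O\!\bigl(n(\sqrt{t}+\log(1/\eps))\log^2\max\{n,1/\eps\}\bigr) \text{ bits},
\]
matching the claim.

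The main thing to verify, and the main potential obstacle, is that the insertion-only algorithm of Theorem~\ref{mainundirected} really accesses its per-vertex samples only through the idealized interface ``give me an i.i.d.\ draw from $f(v,\cdot)/d(v)$'', so that an $\ell_1$-sampler is a genuine drop-in replacement; once this is explicit, the reduction is completely black-box. No new probabilistic analysis of random walks should be needed: the sampler's internal randomness is independent of the walk's coins and independent across samplers, both standard properties of the construction, and the remaining accounting is the routine union bound together with the $\ell_1$-sampler space bound.
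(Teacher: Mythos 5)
Your description of the insertion-only undirected algorithm is not what the paper does, and the gap you flag in your last paragraph is real. Theorem~\ref{mainundirected} does \emph{not} maintain, for each vertex, a pool of i.i.d.\ draws from the full neighbor distribution $u \mapsto f(v,u)/d(v)$, and in fact that algorithm would be incorrect: consider a star with center $c$; a $t$-step walk from $c$ visits $c$ exactly $\lceil t/2 \rceil$ times, so any pool of $O(\sqrt{t})$ samples at $c$ is deterministically exhausted. The algorithm instead splits $c$'s outgoing arcs into \emph{important} arcs $E_1$ (arcs into small-degree vertices in the simple case, or Misra--Gries heavy hitters in the multi-edge case) and \emph{unimportant} arcs $E_0$. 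Important arcs are stored exactly and can be traversed any number of times; only the unimportant arcs are sampled, and the reversal argument of Lemma~\ref{capa} only bounds the number of times $u$ exits along an \emph{unimportant} arc by $O(\sqrt{t})$, not the total visit count. In the star example every arc from $c$ is important, so no samples are consumed at $c$ at all.

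Consequently a drop-in replacement of reservoir samples by $\ell_1$ samplers over the full neighbor vector does not give a correct turnstile algorithm. Porting to turnstile requires two further ideas, both used by the paper: (1) replace the exact (Misra--Gries) maintenance of $E_1$ by an $\ell_1$ heavy-hitter sketch (Count-Min) per vertex $v$ to recover, at the end of the stream, multiplicity estimates $A_v(u)$ for the heavy arcs, which then become the important arcs; and (2) make the $\ell_1$ samplers sample only from the \emph{unimportant} arcs, which is done by feeding each sampler a negative update $(v,-A_v(u))$ for every reported heavy hitter $u$, exploiting linearity to ``subtract off'' the important mass. Without step (2) the samplers draw from the full distribution and the failure analysis breaks.

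There is also a smaller quantitative issue: you set each sampler's failure probability to $\delta=\eps/(nk')$ and union-bound, which costs $\log(1/\delta)=\Theta(\log\max\{n,1/\eps\})$ extra per sampler on top of the $\log^2$ already in the sampler bound, giving $\log^3$ rather than the claimed $\log^2$. The paper avoids this by running each sampler with constant failure probability $1/2$, taking $C' = 2C + O(\log(t/\eps))$ of them per vertex, and invoking a Chernoff bound so that at least $C$ succeed except with probability $\eps/(2t)$; the per-vertex failure events are then union-bounded over the at most $t$ visited vertices exactly as in Lemma~\ref{union}.
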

\section{Directed graphs in the insertion-only model}
The simplest algorithm uses $O(n^2)$ words of space  (or only $O(n^2)$ \textit{bits}, if we assume the graph is simple) to store the adjacency matrix of the graph. 
When $t\ll n$, a better solution is to use reservoir sampling.
  \begin{lemma}[Reservoir sampling]
  Given a stream of $n$ items as input, we can uniformly sample $m$ of them without replacement using $O(m)$ words of memory.
  \end{lemma}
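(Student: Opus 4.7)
The plan is to use the classical reservoir sampling algorithm (Algorithm R, due to Waterman) and verify its correctness by induction on the number of items processed. The algorithm maintains a reservoir $R$ of size $m$: it stores the first $m$ items directly into $R$, and for every subsequent arriving item indexed $i > m$, it decides to insert it into $R$ with probability $m/i$, in which case it overwrites a uniformly random one of the $m$ slots. The memory usage is immediate: $R$ holds $m$ items plus a counter $i$, totaling $O(m)$ words.

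The main claim to verify is that after processing all $n$ items the contents of $R$ form a uniformly random $m$-subset of the stream. I would prove by induction on $i \ge m$ that the distribution of $R$ after processing $i$ items is uniform over the $\binom{i}{m}$ subsets of $\{1,\dots,i\}$. The base case $i = m$ is trivial since $R$ is forced to equal $\{1,\dots,m\}$. For the inductive step, fix any target subset $S$ of size $m$ from $\{1,\dots,i\}$ and compute $\P[R = S]$ by splitting on whether $i \in S$. If $i \notin S$, then $S$ must already have been the reservoir after step $i-1$, and item $i$ must have been rejected, giving probability $\tfrac{1}{\binom{i-1}{m}} \cdot \bigl(1 - \tfrac{m}{i}\bigr)$. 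If $i \in S$, then after step $i-1$ the reservoir must have been $(S\setminus\{i\})\cup\{x\}$ for some $x \in \{1,\dots,i-1\}\setminus S$, item $i$ must have been accepted, and the slot holding $x$ must have been the one overwritten; summing over the $i-m$ choices of $x$ gives $(i-m) \cdot \tfrac{1}{\binom{i-1}{m}} \cdot \tfrac{m}{i} \cdot \tfrac{1}{m}$. A short calculation simplifies both cases to $1/\binom{i}{m}$, completing the induction.

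There is no real obstacle here; the only subtlety is the case split in the inductive step and being careful that the replacement slot is chosen uniformly from the $m$ positions, so every $x \in \{1,\dots,i-1\}\setminus S$ contributes a factor $1/m$. Since $m/i \le 1$ for all $i \ge m$, the probabilities used by the algorithm are well defined, and the running invariant that $|R| = m$ is preserved throughout, which is what lets us bound the space by $O(m)$ words.
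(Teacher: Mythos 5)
Your proof is correct: the algorithm you describe is the classical Algorithm R, and your inductive argument (with the case split on whether the new item $i$ belongs to the target subset $S$, and the factor $1/m$ from the uniformly chosen replacement slot) establishes uniformity over all $\binom{i}{m}$ subsets at every step. The paper states this lemma as a standard known result without proof, so there is nothing to compare against; your argument is the standard and correct one.
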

  We can also sample $m$ items from the stream \textit{with} replacement in $O(m)$ words of memory using $m$ independent reservoir samplers each with capacity 1.
\begin{theorem}
\label{algdirected}
 We can simulate a $t$-step random walk on a directed graph in the insertion-only model perfectly using $O(nt)$ words of memory.
\end{theorem}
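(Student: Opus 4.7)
The plan is, for each vertex $u \in V$, to maintain $t$ independent reservoir samplers of capacity $1$, each fed with the outgoing edges of $u$ as they appear in the stream. In total this uses $nt$ samplers, each storing $O(1)$ words (one sampled edge plus a counter for the number of outgoing edges of $u$ seen so far), which gives $O(nt)$ words overall and matches the advertised bound.

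During the streaming phase, whenever an edge $(u,v)$ arrives, I would feed it to every one of the $t$ samplers attached to $u$. After the stream ends, the $k$-th sampler at $u$ holds an outgoing edge of $u$ chosen uniformly at random (with repetition when $u$ has multiple edges to the same neighbor), and because the samplers use independent random coins the $nt$ stored samples are mutually independent.

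To answer a query with starting vertex $v_0$, I would build the walk $v_0,v_1,\dots,v_t$ online: maintain a counter $c(u)$, initialized to $0$, for each vertex; at step $i$, increment $c(v_i)$ and read $v_{i+1}$ as the endpoint of the edge stored in the $c(v_i)$-th sampler attached to $v_i$. Since the walk takes only $t$ steps, no counter ever exceeds $t$, so the $t$ pre-computed samples per vertex are always enough, even if the walk returns to the same vertex many times.

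For correctness, I would argue by induction on $i$ that the conditional distribution of $v_{i+1}$ given $(v_0,\dots,v_i)$ is exactly $f(v_i,\cdot)/d(v_i)$, which is the factor appearing in the definition of $\mathcal{RW}_{v_0,t}$. The key observation is independence: at each step we consult a sampler that has never been consulted before in the simulation, and by construction this sampler is independent of all previously consulted samplers, so its marginal distribution is unaffected by conditioning on the history. I do not foresee any serious obstacle; the only subtlety worth checking is that $t$ samples per vertex truly suffice against adversarial re-visits, which is immediate because the walk has only $t$ steps in total.
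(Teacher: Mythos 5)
Your proposal is correct and is essentially the paper's own proof: maintain $t$ independent capacity-$1$ reservoir samplers per vertex (i.e., sample $t$ outgoing edges with replacement), and on the $i$-th visit to a vertex follow its $i$-th sample. The paper states this in one line; you have just spelled out the independence argument, which is exactly the intended justification.
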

\begin{proof}
 For each vertex $u\in V$, we sample $t$ edges $e_{u,1},\dots,e_{u,t}$ outgoing from $u$ with replacement. Then we perform a random walk using these edges. When $u$ is visited for the $i$-th time ($i \le t$), we go along edge $e_{u,i}$.  
\end{proof}

By treating an undirected edge as two opposite directed edges, we can achieve the same space complexity in undirected graphs.

Now we show a space lower bound for the problem. We will use a standard result from communication complexity.

\begin{definition}
In the \textsc{Index} problem, Alice has an  $n$-bit vector $X \in \{0,1\}^n$ and Bob has an index $i \in [n]$.
Alice sends a message to Bob, and then Bob should output the bit $X_i$.
\end{definition}

\begin{lemma}[\cite{miltersen1998data}]
\label{index}
For any constant $1/2<c\le 1$, solving the \textsc{Index} problem with success probability $c$ requires sending $\Omega(n)$ bits.
\end{lemma}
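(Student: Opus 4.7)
The plan is to prove the Index lower bound by a standard information-theoretic (distributional) argument, which suffices by Yao's minimax principle: if every deterministic protocol requires $\Omega(n)$ bits under the uniform input distribution, then so does every randomized protocol against the worst-case input. So I fix the hard distribution in which $X \in \{0,1\}^n$ is uniformly random, $i \in [n]$ is uniformly random, and $X, i$ are independent. Let $M = M(X)$ be Alice's (deterministic) message, of length at most $\ell$ bits, and let $Y = Y(M, i)$ be Bob's output. The goal is to show $\ell = \Omega(n)$.

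First I would record two basic structural facts. Since Alice does not see $i$, the pair $(X, M)$ is independent of $i$, so for any $j$ we have $I(X_j; M \mid i = j) = I(X_j; M)$. Second, the bits $X_1, \ldots, X_n$ are i.i.d.\ uniform, so by the standard ``sum of mutual informations'' inequality for independent sources,
\begin{equation*}
\sum_{j=1}^n I(X_j; M) \;\le\; I(X; M) \;\le\; H(M) \;\le\; \ell.
\end{equation*}

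Next I would lower-bound each term $I(X_j; M)$ using the protocol's correctness. Let $c_j = \Pr[Y = X_j \mid i = j]$, so that by assumption $\mathbb{E}_j[c_j] \ge c$. Since $X_j$ is a single bit, Fano's inequality gives $H(X_j \mid Y, i = j) \le H_2(1 - c_j)$, where $H_2$ is the binary entropy function. Combining this with $H(X_j \mid i = j) = 1$ and the data-processing inequality through $Y = Y(M, i)$ yields
\begin{equation*}
I(X_j; M) = I(X_j; M \mid i = j) \;\ge\; I(X_j; Y \mid i = j) \;\ge\; 1 - H_2(1 - c_j).
\end{equation*}

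Finally I would put the two bounds together. Summing over $j$ and using the concavity of $H_2$ (so $\frac{1}{n}\sum_j H_2(1-c_j) \le H_2(1 - \frac{1}{n}\sum_j c_j) \le H_2(1-c)$ for $c \ge 1/2$) gives
\begin{equation*}
\ell \;\ge\; \sum_{j=1}^n I(X_j; M) \;\ge\; n\bigl(1 - H_2(1-c)\bigr).
\end{equation*}
Since $c > 1/2$ is a constant, $1 - H_2(1-c) > 0$ is a positive constant, so $\ell = \Omega(n)$ as required. The only real subtlety is being careful about which variables are independent conditional on $i$ (this is what lets the one-way, non-adaptive nature of Alice's message translate into $I(X_j; M \mid i = j) = I(X_j; M)$); once that is in place, the proof is essentially Fano plus the chain rule.
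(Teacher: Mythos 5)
The paper does not prove this lemma; it imports it as a black box from Miltersen, Nisan, Safra, and Wigderson (the cited \texttt{miltersen1998data} reference), so there is no in-paper proof to compare against. Your argument is the standard information-theoretic derivation of the one-way \textsc{Index} lower bound, and it is correct: Yao's principle reduces to deterministic protocols under the uniform distribution; the one-way structure makes $M$ a function of $X$ alone, so $(X,M)\perp i$ and $I(X_j;M\mid i=j)=I(X_j;M)$; independence of the $X_j$'s gives the superadditivity bound $\sum_j I(X_j;M)\le I(X;M)\le H(M)\le\ell$; Fano for a binary source together with data processing (conditional on $i=j$, $Y$ is a function of $M$, so $X_j\to M\to Y$ is Markov) gives $I(X_j;M)\ge 1-H_2(1-c_j)$; and Jensen applied to the concave $H_2$ together with $H_2$ being increasing on $[0,1/2]$ yields $\ell\ge n\bigl(1-H_2(1-c)\bigr)=\Omega(n)$ for any constant $c>1/2$. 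This is a clean, self-contained route to a fact the paper simply cites.
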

\begin{theorem}
For $t \le n/2$, simulating a $t$-step random walk on a simple directed graph in the insertion-only model within error $\eps = \frac{1}{3}$ requires $\Omega(nt \log (n/t))$ bits of memory.
\end{theorem}

\begin{proof}
We prove by showing a reduction from the \textsc{Index} problem.  Before the protocol starts, Alice and Bob agree on a family $\mathcal{F}$ of $t$-subsets of $[n]$ \footnote{Define $[n] = \{1,2,\dots,n\}$. A $t$-subset is a subset of size $t$.} such that the condition $|S \cap S'| < t/2$ is satisfied for every $S,S'\in \mathcal{F}, S\neq S'$. 
For two independent uniform random $t$-subsets $S,S'\subseteq [n]$, let $p=\P[|S\cap S'|\ge t/2] \le \binom{t}{t/2} (\frac{t}{n})^{t/2}<(\frac{4t}{n})^{t/2}$.
By union bound over all pairs of subsets, a randomly generated family $\mathcal{F}$ satisfies the condition with probability at least $1- \binom{|\mathcal{F}|}{2} p$, which is positive when $|\mathcal{F}| = \lceil \sqrt{1/p} \rceil \ge (\frac{n}{4t})^{t/4}$.
So we can choose such family $\mathcal{F}$ with $\log |\mathcal{F}| = \Omega(t\log(n/t))$.

Assume $|\mathcal{F}|$ is a power of two. Alice encodes $n  \log | \mathcal{F}|$ bits  as follows. Let $G$ be a directed graph with vertex set $\{v_0,v_1,\dots,v_{2n}\}$. For each vertex $u \in \{v_{n+1},v_{n+2},\dots,v_{2n}\}$, Alice chooses a set $S_u \in \mathcal{F}$, and inserts an edge $(u,v_i)$ for every $i\in S_u$. 

Suppose Bob wants to query $S_u$.  He adds an edge $(v,u)$ for every $v \in \{v_0,v_1,v_2,\dots,v_n\}$, and then simulates a random walk starting from $v_0$.
The random walk visits $u$ every two steps, and it next visits $v_i$ for some random $i\in S_u$.
At least $t/2$ different elements from $S_u$ can be seen in $2t$ samples with probability at least $1-\binom{t}{t/2} (\frac{1}{2})^{2t} \ge 1-2^{-t}$, so $S_u$ can be uniquely determined by an $O(t)$-step random walk (simulated within error $\eps$) with probability $1-2^{-t}- \frac{\eps}{2}>\frac{1}{2}$.
By Lemma~\ref{index}, the space usage for simulating the $O(t)$-step random walk is at least $\Omega(n \log |\mathcal{F}|) = \Omega(nt \log(n/t))$ bits. The theorem is proved by scaling down $n$ and $t$ by a constant factor.
\end{proof}

For simple graphs, we can achieve an upper bound of $O(nt\log(n/t))$ bits. 
\begin{theorem}
For $t\le n/2$, we can simulate a $t$-step random walk on a simple directed graph in the insertion-only model perfectly using $O(nt\log (n/t))$ bits of memory.
\end{theorem}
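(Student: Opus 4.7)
The plan is to improve the per-vertex encoding used in the proof of Theorem~\ref{algdirected}. That algorithm stores, for each vertex $u$, the ordered tuple $(e_{u,1},\dots,e_{u,t})$ of $t$ i.i.d.\ uniform samples from $N(u)$, which costs $t\log n$ bits per vertex. The key observation is that we only ever need this tuple \emph{in some ordering consistent with its multiset} $M_u=\{\!\{e_{u,1},\dots,e_{u,t}\}\!\}$: since $(e_{u,1},\dots,e_{u,t})$ is uniform on $N(u)^t$, the conditional distribution given $M_u$ is exactly uniform over the orderings of $M_u$. So at query time, whenever the walk visits $u$ for the $i$-th time, I would output the $i$-th element of a freshly drawn uniformly random ordering of $M_u$ (generated lazily); this reproduces the original distribution of the tuple and hence of the walk.

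It therefore suffices to store the multiset $M_u$ for each $u$. A multiset of size $t$ from $[n]$ has at most $\binom{n+t-1}{t}$ possibilities, and one checks $\log\binom{n+t-1}{t}=O(t\log(n/t))$ for $t\le n/2$. Concretely, I would encode $M_u$ by its support $S_u\subseteq[n]$ (of size $k:=|S_u|\le t$), written in $\log\binom{n}{k}$ bits via a combinatorial-number-system encoding of a $k$-subset of $[n]$, together with the $k$ positive counts summing to $t$ in a further $\log\binom{t-1}{k-1}\le t$ bits. Summed over all $n$ vertices this gives $O(nt\log(n/t))$ bits, and the auxiliary $O(n\log n)$ bits used to track the running out-degrees $d_u$ are absorbed into the bound (since $\log n\le t\log(n/t)$ for $1\le t\le n/2$).

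The streaming maintenance of $M_u$ is the central step. When an edge $(u,v)$ arrives, I would increment $d_u$ and then refresh $M_u$ so that after the update each of the $t$ implicit ``slots'' is uniform over the $d_u$ destinations. Since each slot was previously uniform on the $d_u-1$ old destinations and must now be uniform on the $d_u$ destinations including $v$, each slot should independently switch to $v$ with probability $1/d_u$. Operationally, iterate over the support of $M_u$: for every $x$ with current count $k_x$ draw $j_x\sim\mathrm{Binomial}(k_x,1/d_u)$, decrease $k_x$ by $j_x$, and finally insert $\sum_x j_x$ copies of $v$ into $M_u$. A straightforward induction on the number of processed outgoing edges of $u$ shows that $M_u$ is always distributed as $t$ i.i.d.\ uniform samples from $N(u)$, so the walk produced at query time has the correct distribution $\mathcal{RW}_{v_0,t}$.

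The main obstacle is carrying out these updates while keeping the compressed encoding of $M_u$ within $O(t\log(n/t))$ bits \emph{throughout} the stream, not just at the end. I would address this by keeping all vertices in compressed form simultaneously and, on each arrival, decompressing the single affected $M_u$ into a temporary $O(t\log n)$-bit buffer, applying the binomial update there, and re-encoding; since only one vertex is active at a time and $t\log n=O(nt\log(n/t))$, the global budget is preserved. (Alternatively, one can use a dynamic succinct representation of sorted integer sequences, e.g.\ gap-coded supports paired with dynamic bit-vectors supporting rank/select, avoiding even the temporary buffer.) The query-time walk, which destructively samples from each $M_u$ as the visit counts grow, fits in the same space bound, completing the proof.
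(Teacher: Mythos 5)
Your approach is correct and reaches the same $O(nt\log(n/t))$-bit bound by a genuinely different route. The paper stores, for each $u$, a reservoir sample of at most $t$ out-neighbors drawn \emph{without} replacement, so the per-vertex state is a subset of $[n]$ of size at most $t$ (encoded in $\log\sum_{i\le t}\binom{n}{i}=O(t\log(n/t))$ bits), and it recovers the required i.i.d.\ draws at query time with a re-use trick: when $u$ is visited, pick a previously used sample uniformly with probability $d_{\text{used}}(u)/d(u)$, and otherwise a fresh unused one from the reservoir. You instead keep the with-replacement samples of Theorem~\ref{algdirected} and compress them to their multiset $M_u$, encoded in $\log\binom{n+t-1}{t}=O(t\log(n/t))$ bits, outputting a lazily drawn random ordering of $M_u$ at query time, which is correct by exchangeability of the i.i.d.\ tuple. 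Your binomial resampling update (each implicit slot moves to the new arrival $v$ with probability $1/d_u$, realized per support element via a $\mathrm{Binomial}(k_x,1/d_u)$ draw) is the right way to maintain the distributional invariant, and your decompress-update-recompress argument keeps the mid-stream budget at $O(nt\log(n/t))$ bits since the temporary $O(t\log n)$-bit buffer is dominated. The trade-offs: the paper's reservoir admits an $O(1)$-word per-edge update with no recompression or auxiliary sampling, and its query-time bookkeeping is a simple $t$-bit used-sample vector; your route is a smaller conceptual modification of the $O(nt)$-word algorithm, since the multiset is literally the sufficient statistic of the with-replacement tuple, but it pays with the more involved binomial update and re-encoding machinery. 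Both are sound and give the stated bound.
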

\begin{proof}
 For every $u\in V$, we run a reservoir sampler with capacity $t$, which samples (at most) $t$ edges from  $u$'s outgoing edges \textit{without} replacement. 
 After reading the entire input stream, we begin simulating the random walk.
When $u$ is visited during the simulation, in the next step we choose at random an outgoing edge used before with probability $d_{\text{used}}(u)/d(u)$, or an unused  edge from the reservoir sampler  with probability $1-d_{\text{used}}(u)/d(u)$, where $d_{\text{used}}(u)$ is the number of edges in $u$'s  sampler that are previously used in the simulation.
We  maintain a $t$-bit vector to keep track of these used samples.
 
 The number of different possible states of a sampler is at most $\sum_{0\le i \le t}\binom{n}{i} \le (t+1)(\frac{en}{t})^t $, so it can be encoded using $\Big \lceil \log \Big ((t+1)(\frac{en}{t})^t\Big ) \Big \rceil = O(t \log(n/t))$ bits. The total space is $O(nt \log(n/t))$ bits.
\end{proof}

\section{Undirected graphs in the insertion-only model}
\subsection{A space lower bound}
\begin{theorem}
\label{thmlbundir}
For $t = O(n^2)$, simulating a $t$-step random walk on a simple undirected graph in the insertion-only model within error $\eps=\frac{1}{3}$ requires $\Omega(n\sqrt{t})$ bits of memory.
\end{theorem}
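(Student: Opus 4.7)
My plan is to reduce from the \textsc{Index} problem with $N=\Theta(n\sqrt{t})$ bits: any one-pass streaming algorithm of space $s$ then yields an \textsc{Index} protocol of communication $O(s)$, so Lemma~\ref{index} will force $s=\Omega(n\sqrt{t})$. The real task is therefore to pack $\Theta(n\sqrt{t})$ bits of Alice's input into a simple undirected graph on $n$ vertices in such a way that a single $t$-step random-walk query reveals any one chosen bit.

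The construction I have in mind is to let $G_X$ be the disjoint union of $n/\sqrt{t}$ ``gadgets'', each a bipartite graph with two sides $A_g, B_g$ of size $\sqrt{t}/2$. Each bit of Alice's input $X$ controls the presence or absence of one of the $(\sqrt{t}/2)^2$ possible edges in a specific gadget, so $G_X$ encodes $N = (n/\sqrt{t})\cdot(\sqrt{t}/2)^2 = n\sqrt{t}/4$ bits. After Alice streams $G_X$ into the simulator and sends Bob the resulting memory state, Bob translates his query index $k$ into a triple $(g^\star, a^\star, b^\star)$, appends a single private pendant leaf to $a^\star$ (so that $\deg(a^\star)\ge 1$ and the simulator does not return \textsc{Fail}), invokes the simulator for a $t$-step walk from $a^\star$, and outputs $1$ iff the walk ever traverses $\{a^\star,b^\star\}$.

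For correctness, the gadgets are disconnected, so the walk stays inside $g^\star$, which is essentially a random bipartite graph on $\sqrt{t}$ vertices with edge density $1/2$. Standard arguments give that, with $X$-probability $1-o(1)$, every degree is $\Theta(\sqrt{t})$ and the spectral gap is $\Omega(1)$, so after $O(\log t)$ steps the walk is close to stationarity, under which each undirected edge is traversed with probability $1/|E(g^\star)|=\Theta(1/t)$ per step. Hence over the $t$ steps the number of traversals of a fixed present edge has expectation $\Theta(1)$ together with a Chernoff/Poisson-type lower tail, so it is visited at least once with some constant probability $p^\star$ that can be driven close to $1$ by tuning the edge density; an absent edge is of course traversed with probability $0$. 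Since the simulator's $\ell_1$-error $\eps=1/3$ can shift any event probability by at most $\eps/2=1/6$, Bob's ``did I see the edge?'' rule distinguishes the two cases with a gap bounded away from $0$, producing an \textsc{Index} protocol with success probability bounded away from $1/2$; Lemma~\ref{index} then forces $s=\Omega(N)=\Omega(n\sqrt{t})$.

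The main obstacle I anticipate is uniform lower-tail control of the traversal count: because Bob's target edge is adversarially chosen (via his \textsc{Index} index), I need to know that \emph{every} potential edge of a typical gadget is visited with constant probability, not just an average one. This reduces to standard random-graph estimates---an $O(\log t)$ mixing-time bound for $G(\sqrt{t}/2,\sqrt{t}/2,1/2)$ together with union-bounded control of all degrees---but requires the constant $p^\star$ to exceed $1/2+\eps/2=2/3$ after the simulator's shift, which in turn constrains the exact edge-density choice. Integer-rounding issues (when $\sqrt{t}$ does not divide $n$) and the low-probability event that $a^\star$ is isolated in its gadget are absorbed into lower-order loss, the latter handled by the pendant leaf already attached by Bob.
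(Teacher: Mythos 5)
Your high-level plan — reduce from \textsc{Index}, pack $\Theta(n\sqrt{t})$ bits into $n/\sqrt{t}$ bipartite gadgets of side $\Theta(\sqrt{t})$, and have Bob detect his bit by watching whether a $t$-step walk traverses the corresponding edge — is exactly the paper's skeleton, and the encoding budget checks out. However, the mechanism you use to make the walk actually \emph{find} the target edge is different from the paper's, and it is precisely at the spot you flag as ``the main obstacle'' that your argument has a real gap.

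You let the walk mix on its own inside the gadget, which is the bipartite random graph $G(\sqrt{t}/2,\sqrt{t}/2,1/2)$ determined by Alice's string $X$, and you invoke a spectral gap and an exponential hitting-time heuristic to claim a per-edge traversal probability $p^\star$ that, after the simulator's $\eps/2=1/6$ shift, exceeds $1/2$. But the expected number of traversals of a fixed present edge is only $\Theta(1)$, the traversal events over consecutive steps are strongly positively correlated (the walk can oscillate $a^\star\!\to\!b^\star\!\to\!a^\star$), and for an adversarial query edge you need the bound to hold uniformly, not on average over the gadget. Turning ``expectation $\Theta(1)$'' into ``$p^\star>2/3$'' requires a genuine Aldous--Fill-type hitting-time concentration argument on top of a union-bounded degree and spectral estimate, none of which you supply; and the distributional variant of the \textsc{Index} bound must be invoked carefully, since for worst-case $X$ the gadget need not be an expander at all (Alice could make it disconnected or path-like). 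So as written the reduction does not go through.

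The paper avoids all of this by letting \emph{Bob} supply the connectivity, not Alice. Bob adds a hub $V_0$ of size $2\sqrt{t}$ and inserts every edge between $V_0$ and $A_j$; the walk starts in $V_0$ and thereafter alternates between $V_0\cup B_j$ and $A_j$. Because every $a\in A_j$ now has degree between $|V_0|=2\sqrt{t}$ and $|V_0|+|B_j|=3\sqrt{t}$ deterministically, one gets $\P[(v_{i+2},v_{i+3})=(a^\star,b^\star)\mid v_i]\ge 2/(9t)$ for \emph{any} Alice input, and a plain $(1-2/(9t))^{t/4}$ bound makes the traversal probability $0.9$ over $O(t)$ steps with no mixing, no spectrum, and no typicality of $X$. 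This also disposes of your pendant-leaf and isolated-$a^\star$ worries for free, since the start vertex $v_0\in V_0$ always has degree $|A_j|\ge 1$. I'd recommend replacing your ``rely on the random graph to expand'' step with the paper's ``Bob plants a dense hub'' device; it is the only substantive change needed, and it collapses the hardest part of your argument into two lines.
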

\begin{proof}
Again we show a reduction from the \textsc{Index} problem.

Alice encodes $\Omega(n\sqrt{t})$ bits as follows. Let $G$ be an undirected graph with vertex set $V_0 \cup V_1 \cup \dots \cup V_{n/\sqrt{t}}$, where each  $V_j$ has size $2\sqrt{t}$, and the starting vertex $v_0 \in V_0$. For each $j\ge 1$, $V_j$ is divided into two subsets $A_j,B_j$ with size $\sqrt{t}$ each, and Alice encodes $|A_j|\times |B_j| = t$ bits by inserting a subset of edges from $\{ (u,v):  u\in A_j, v\in B_j\}$. In total she encodes $t\cdot n/\sqrt{t} = n\sqrt{t}$ bits.

Suppose Bob wants to query some bit, i.e., he wants to see whether $a$ and $b$ are connected by an edge. Assume $(a,b) \in A_j \times B_j$. He adds an edge $(u,v)$ for every $u\in A_j$ and every $v \in V_0$ (see Figure~\ref{fig:lbundir}).   A perfect random walk starting from $v_0 \in V_0$ will be inside the bipartite subgraph $(A_j, B_j\cup V_0)$.  Suppose the current vertex of the perfect random walk is $v_i \in A_j$. If $a,b$ are connected by an edge, then
\begin{align*}
&\P[(v_{i+2},v_{i+3})=(a,b) \mid v_i] \\
\ge {}& \P[v_{i+1} \in V_0 \mid v_i]\,\P[v_{i+2} = a \mid v_{i+1}\in V_0] \,\P[v_{i+3} = b \mid v_{i+2}=a]\\
\ge {}& \frac{|V_0|}{|V_0|+|B_j|} \cdot \frac{1}{|A_j|} \cdot \frac{1}{|V_0|+|B_j|}\\
\ge {}& \frac{2}{9t},
\end{align*}
so in every four steps the edge $(a,b)$ is passed with probability $\Omega(\frac{1}{t})$.  Then a $O(t)$-step perfect random walk will pass the edge $(a,b)$ with probability $0.9$.  Hence Bob can know whether the edge $(a,b)$ exists by looking at the random walk (simulated within error $\eps$) with success probability $0.9-\frac{\eps}{2}>1/2$.
By Lemma~\ref{index}, the space usage for simulating the $O(t)$-step random walk is at least $\Omega(n \sqrt{t})$ bits. The theorem is proved by scaling down $n$ and $t$ by a constant factor.
\begin{figure}
\begin{center}
\begin{tikzpicture}
\tikzstyle{vertex}=[circle,draw,text=black]
\tikzstyle{txt}=[text=black]
\tikzstyle{edge} = [draw,black,line width = 1.4pt,-]
\draw[rounded corners, draw=white, fill = black!15] (3,5) rectangle (3.9,6.8);
\draw[rounded corners, draw=white, fill = black!15] (5,5) rectangle (5.9,6.8);
\draw[rounded corners, draw=white, fill = black!15] (7,5) rectangle (7.9,6.8);
\draw[rounded corners, draw=white, fill = black!15] (5,1.2) rectangle (5.9,4.4);
\fill [fill=black!80] (3.45,5.4) circle(0.1) node[left] {$a$};
\fill [fill=black!80] (3.45,6.4) circle(0.1) node[left] {$ $};
\fill [fill=black!80] (5.45,6.4) circle(0.1) node[right] {$b$};
\fill [fill=black!80] (5.45,5.4) circle(0.1) node[right] {$ $};
\node[txt] (2) at (3.45,7.2) {$A_j$};
\node[txt] (1) at (5.45,7.2) {$B_j$};
\node[txt] (1) at (7.45,7.2) {$A_{j+1}$};
\node[txt] (1) at (5.45,0.8) {$V_0$};
\node[txt] (1) at (5.45,6.0) {$\vdots$};
\node[txt] (1) at (9.45,6.0) {$\cdots$};
\node[txt] (1) at (5.45,2.0) {$\vdots$};
\node[txt] (1) at (3.45,6.0) {$\vdots$};
\fill [fill=black!80] (5.45,3.9) circle(0.1) node[right] {};
\fill [fill=black!80] (5.45,3.3) circle(0.1) node[right] {};
\fill [fill=black!80] (5.45,2.7) circle(0.1) node[right] {$v_0$};

\path[draw=black,line width = 0.6pt](3.45,5.4)--(5.45,6.4);
\path[draw=black,line width = 0.6pt](3.45,6.4)--(5.45,5.4);
\path[draw=black,line width = 0.6pt](3.45,5.4)--(5.45,3.9);
\path[draw=black,line width = 0.6pt](3.45,5.4)--(5.45,3.3);
\path[draw=black,line width = 0.6pt](3.45,5.4)--(5.45,2.7);
\path[draw=black,line width = 0.6pt](3.45,6.4)--(5.45,3.9);
\path[draw=black,line width = 0.6pt](3.45,6.4)--(5.45,3.3);
\path[draw=black,line width = 0.6pt](3.45,6.4)--(5.45,2.7);
\end{tikzpicture}
\end{center}
\caption{Proof of Theorem~\ref{thmlbundir}}
\label{fig:lbundir} 
\end{figure}
\end{proof}
\subsection{An algorithm for simple graphs}
Now we describe our algorithm for undirected graphs in the insertion-only model.  As a warm-up, we consider simple graphs in this section. We will deal with multi-edges in Section~\ref{secmulti}.

\paragraph*{Intuition}
We start by informally explaining the intuition of our algorithm for simple undirected graphs.

 We maintain a subset of $O(n\sqrt{t})$ edges from the input graph, and use them to simulate the random walk after reading the entire input stream. 

For a vertex $u$ with degree smaller than $\sqrt{t}$, we can afford to store all its neighboring edges in memory. For $u$ with degree greater than $\sqrt{t}$, we can only sample and store $O(\sqrt{t})$ of its neighboring edges. During the simulation, at every step we first toss a coin to decide whether the next vertex has small degree or large degree. In the latter case, we have to pick a sampled neighboring edge and walk along it. If all sampled neighboring edges have already been used, our algorithm fails. Using the large degree and the fact that edges are undirected, we can show that the failure probability is low.

\paragraph*{Description of the algorithm}
We divide the vertices into two types according to their degrees: the set of \textit{big} vertices $B = \{u \in V: d(u)\ge C+1\}$, and the set of \textit{small} vertices $S = \{u \in V: d(u)\le C\}$, where parameter $C$ is an positive integer to be determined later.

We use \textit{arc} $(u,v)$ to refer to an edge when we want to specify the direction $u\to v$. So an undirected edge $(u,v)$ corresponds to two different\footnote{We have assumed no self-loops exist, so $u \neq v$.} arcs, arc $(u,v)$ and arc $(v,u)$.

We say an arc $(u,v)$ is \textit{important} if $v \in S$, or \textit{unimportant} if $v \in B$. Denote the set of important arcs by $E_1$, and the set of unimportant arcs by $E_0$. The total number of important arcs equals $\sum_{s\in S}d(s) \le |S|C$, so it is possible to store $E_1$ in $O(nC)$ words of space. 

The set $E_0$ of unimportant arcs can be huge, so we only store a subset of $E_0$.
For every vertex $u$, we sample with replacement $C$ unimportant arcs outgoing from $u$, denoted by $a_{u,1},\dots,a_{u,C}$. 

To maintain the set $E_1$ of important arcs and the samples of unimportant arcs after every edge insertion, we need to handle the events when some small vertex becomes big. This procedure is straightforward, as described by \textsc{ProcessInput} in Figure~\ref{fig:processinput}. Since $|E_1|$ never exceeds $nC$, and each of the $n$ samplers uses $O(C)$ words of space, the overall space complexity is $O(nC)$ words.

\begin{figure}
\begin{center}
\begin{algorithmic}
\Procedure{InsertArc}{$u,v$}
\State{$d(v) \gets d(v) +1$} 
\If{$d(v)= C+1$}\Comment{$v$ changes from small to big}
\For{$x \in V$ such that $(x,v) \in E_1$} \Comment{arc $(x,v)$ becomes unimportant}
\State{$E_1 \gets E_1 \backslash \{(x,v)\}$}
  \State{Feed arc $(x,v)$ into $x$'s sampler}
\EndFor
\EndIf
\If{$d(v)\le C$} \Comment{$v\in S$}
  \State{$E_1 \gets E_1  \cup \{(u,v)\}$}
  \Else\Comment{$v\in B$}
  \State{Feed arc $(u,v)$ into $u$'s sampler}
\EndIf
\EndProcedure
\Procedure{ProcessInput}{}
\State{$E_1 \gets \emptyset$}\Comment{Set of important arcs}
\For{$u \in V$}
      \State{$d(u) \gets 0$} 
      \State{Initialize $u$'s sampler (initially empty) which maintains $a_{u,1},\dots,a_{u,C}$}
\EndFor
\For{undirected edge $(u,v)$ in the input stream} 
\State{\textsc{InsertArc}$(u,v)$}
\State{\textsc{InsertArc}$(v,u)$}
\EndFor
\EndProcedure
\end{algorithmic}
\caption{Pseudocode for processing the input stream (for simple undirected graphs)}\label{fig:processinput} 
\end{center}
\end{figure}

 We begin simulating the random walk after \textsc{ProcessInput} finishes.
 When the current vertex of the random walk is $v$, with probability $d_1(v)/d(v)$ the next step will be along an important arc, where $d_1(v)$ denotes the number of important arcs outgoing from $v$.
 In this case we simply choose a uniform random vertex from $\{u: (v,u) \in E_1\}$ as the next vertex.
 However, if the next step is along an unimportant arc, we need to choose an unused sample  $a_{v,j}$ and go along this arc.
 If at this time all $C$ samples $a_{v,j}$ are already used, then our algorithm fails (and is allowed to return an arbitrary walk).
 The pseudocode of this simulating procedure is given in Figure~\ref{fig:simulation}.

\begin{figure}
\begin{center}
\begin{algorithmic}
\Procedure{SimulateRandomWalk}{$v_0,t$}
\For{$v \in V$} \State{$c(v) \gets 0$} \Comment{counter of used samples}
\EndFor
\For{$i = 0,\dots,t-1$}
\State{$N_1 \gets \{u: (v_{i},u) \in E_1\}$}
\State{$x \gets$ uniformly random integer from $\{1,2,\dots,d(v_{i})\}$}
\If{$x \le |N_1|$}
\State{$v_{i+1} \gets $ uniformly random vertex from $N_1$}
\Else
\State{$j \gets c(v_{i}) + 1$}
\State{$c(v_{i}) \gets j$}
\If{$j>C$}
\Return{\textsc{Fail}}
\Else
\State{$v_{i +1}\gets u$, where $(v_{i},u) = a_{v_{i},j}$}
\EndIf
\EndIf
\EndFor
\Return $(v_0,\dots,v_t)$
\EndProcedure
\end{algorithmic}
\caption{Pseudocode for simulating a $t$-step random walk starting from $v_0$} \label{fig:simulation}
\end{center}
\end{figure}

In a walk $w = (v_0,\dots,v_t)$, we say vertex $u$ \textit{fails} if $\left |\{i: v_i = u\text{ and }(v_i, v_{i+1}) \in E_0 \}\right| > C$. If no vertex fails in $w$, then our algorithm will successfully return $w$ with probability $\mathcal{RW}_{v_0,t}(w)$. Otherwise our algorithm will fail after some vertex runs out of the sampled unimportant arcs.  To ensure the output distribution is $\eps$-close to $\mathcal{RW}_{v_0,t}$ in $\ell_1$ distance, it suffices to make our algorithm fail with probability at most $\eps/2$, by choosing a large enough capacity $C$. 

To bound the probability $\P[\text{at least one vertex fails} \mid v_0=s]$\footnote{If not specified, assume the probability space is over all $t$-step random walks $(v_0,\dots,v_t)$ starting from $v_0$.}, we will bound the individual failure probability of every vertex, and then use union bound.
\begin{lemma}
\label{union}
 Suppose for every $u\in V$, $\P[\text{$u$ fails}\mid \text{$v_0=u$}] \le \delta$. Then for any starting vertex $s \in V$, $\P[\text{at least one vertex fails}\mid v_0=s]\le t\delta$. 
\end{lemma}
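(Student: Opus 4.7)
The plan is a union bound over vertices combined with a reduction via the Markov property. Fix the starting vertex $s$ and let $A_u$ denote the event that $u$ fails in the $t$-step walk from $s$. By union bound $\P[\text{some vertex fails}\mid v_0=s]\le \sum_{u\in V}\P[A_u\mid v_0=s]$, so it suffices to prove $\sum_u \P[A_u\mid v_0=s]\le t\delta$.

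For each $u\in V$, let $T_u=\min\{i:v_i=u\}$ be the first hitting time of $u$ (set to $\infty$ if $u$ is never visited). Failing requires $u$ to appear at some position $i\in\{0,1,\dots,t-1\}$, so that the outgoing arc $(v_i,v_{i+1})$ is defined and unimportant; hence $A_u\subseteq\{T_u\le t-1\}$. I would then condition on the value of $T_u$: by the Markov property, conditional on $T_u=k$ and $v_0=s$, the suffix $(v_k,v_{k+1},\dots,v_t)$ is distributed as a $(t-k)$-step random walk starting from $u$. Since a failure of $u$ in a $(t-k)$-step walk from $u$ can be coupled to a failure of $u$ in a $t$-step walk from $u$ (extend the shorter walk arbitrarily to length $t$; the count of unimportant-arc departures from $u$ is monotone in the walk length), the hypothesis gives $\P[A_u\mid T_u=k,\,v_0=s]\le \P[u\text{ fails}\mid v_0=u]\le \delta$.

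Combining these bounds, $\P[A_u\mid v_0=s]\le \delta\cdot \P[T_u\le t-1\mid v_0=s]$. Summing over $u$ yields $\sum_{u}\P[A_u\mid v_0=s]\le \delta\cdot \sum_u \P[T_u\le t-1\mid v_0=s]$. But $\sum_u \1[T_u\le t-1]$ equals the number of distinct vertices appearing among $v_0,\dots,v_{t-1}$, which is deterministically at most $t$. Taking expectations gives $\sum_u\P[T_u\le t-1\mid v_0=s]\le t$, and therefore $\sum_u\P[A_u\mid v_0=s]\le t\delta$, as required.

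The main conceptual step I expect to have to justify carefully is the reduction from a walk starting at the arbitrary vertex $s$ to one starting at $u$: the hypothesis gives a bound only for walks initiated at $u$, and the Markov property at $T_u$ is what bridges this gap. The minor technicality that the suffix walk has length $t-k$ rather than $t$ is dispatched by the trivial monotonicity of the failure event in the walk length, and the improvement from a naive bound of $n\delta$ to $t\delta$ comes from observing that only vertices actually visited in the first $t$ positions can fail.
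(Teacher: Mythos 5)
Your proof is correct and takes essentially the same route as the paper: a union bound over vertices, a per-vertex bound $\P[u\text{ fails}\mid v_0=s]\le\delta\cdot\P[u\text{ visited in first }t\text{ steps}\mid v_0=s]$, and then summing via the observation that the expected number of distinct vertices visited among $v_0,\dots,v_{t-1}$ is at most $t$. The one place you are more explicit than the paper is the key inequality $\P[u\text{ fails}\mid v_0=s,\ \exists i\le t-1, v_i=u]\le \P[u\text{ fails}\mid v_0=u]$, which the paper asserts in a single line and you justify by conditioning on the first hitting time $T_u$, applying the Markov property, and noting that the failure event is monotone in the walk length.
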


\begin{proof}
Fix a starting vertex $s$. For any particular $u \in V$, 
\begin{align*}
 &\P[\text{$u$ fails}\mid \text{$v_0=s$}]\\
= {}&\P[\text{$u$ fails, and $\exists i\le t-1, v_i=u$}\mid \text{$v_0=s$}]\\
 ={}&\P[\exists i\le t-1, v_i=u\mid \text{$v_0= s$}] \, \P[\text{$u$ fails}\mid \text{$v_0=s$, and $\exists i\le t-1, v_i=u$}]\\
 \le{}&\P[\text{$\exists i\le t-1, v_i=u$}\mid \text{$v_0=s$}]\, \P[\text{$u$ fails}\mid \text{$v_0=u$}]\\
 \le {}&  \P[\text{$\exists i\le t-1, v_i=u$}\mid \text{$v_0=s$}]\cdot\delta .
\end{align*}
By union bound,
\begin{align*}
  &\P[\text{at least one vertex fails}\mid v_0=s]\\
  \le {}& \sum_{u\in V} \P[\text{$u$ fails}\mid \text{$v_0=s$}]\\
  \le {}& \sum_{u\in V}
   \P[\exists i\le t-1, v_i=u\mid \text{$v_0=s$}]\cdot \delta\\
  ={} &  \E[\text{number of distinct vertices visited in $\{v_0,\dots,v_{t-1}\}$}\mid v_0=s]\cdot \delta\\
  \le{} & t\delta. 
\end{align*}
\end{proof}

\begin{lemma}
\label{capa}
We can choose integer parameter $C = O\left (  \sqrt{t}\cdot \frac{q}{\log q} \right ) $, where $q = 2+\frac{\log(1/\delta)}{\sqrt{t}}$, so that $\P[\text{$u$ fails} \mid \text{$v_0=u$}] \le \delta $ holds for every $u\in V$.
\end{lemma}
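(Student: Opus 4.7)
The plan is to bound the failure probability $\P[u \text{ fails} \mid v_0 = u] = \P[Y > C]$, where $Y = |\{i : v_i = u, v_{i+1} \in B\}|$ counts the bad transitions out of $u$ during the walk. I combine an expectation bound that exploits undirectedness with a Chernoff-style tail.

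For the expectation, let $p = d_B(u)/d(u)$ denote the probability that a given exit from $u$ is bad and let $N$ count visits to $u$ in the walk of length $t$, so $\E[Y] = p\,\E[N]$. By Kac's lemma for the reversible undirected random walk, the expected return time to $u$ is $2m/d(u)$, giving $\E[N] \le 1 + td(u)/(2m)$. The key use of undirectedness: each of $u$'s $d_B(u)$ big neighbors has degree at least $C+1$, so $2m \ge d(u) + d_B(u)(C+1)$; substituting and simplifying yields $\E[Y] \le 1 + t/(C+1) = O(t/(C+1))$.

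For the tail, I would prove $\P[Y > C] \le (et/(C(C+1)))^{C}$ up to constants. Viewing the walk as a sequence of i.i.d.\ excursions from $u$, each independently ``bad'' with probability $p$ (since the bad/good coin flip at $u$ is independent of the subsequent excursion by the Markov property), we can write $Y = \sum_{j=1}^K \eta_j$ with $\eta_j$ i.i.d.\ Bernoulli$(p)$ and $K$ the random number of completed excursions. Optional stopping applied to the exponential martingale $M_k = (1+p(e^\theta - 1))^{-k} e^{\theta \sum_{j \le k} \eta_j}$ at the stopping time $K$ gives the identity $\E[e^{\theta Y}] = \E[(1+p(e^\theta - 1))^K]$. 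Combining the deterministic bound $K \le t/2$ (each excursion has length $\ge 2$ in a simple graph) with standard Chernoff optimization at $\theta = \log(C(C+1)/(et))$ then yields the claimed tail; a small case split handles the regime where $p$ is large enough that the bound $K \le t/2$ is loose, using the tighter $\E[\tau] \ge p(C+1)$ on the return time to control $K$. Finally, setting the tail bound $\le \delta$ with $C = \alpha \sqrt t$ and $q = 2 + \log(1/\delta)/\sqrt t$ reduces the requirement to $\alpha \log(\alpha^2/e) \ge q-2$, which is satisfied by $\alpha = \Theta(q/\log q)$. Hence $C = O(\sqrt t \cdot q/\log q)$ suffices.

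The main obstacle is the concentration step. Since the walk's trajectory (and therefore $K$) depends on the $\eta_j$'s, these flips are not independent of $K$, so one cannot naively condition on $K$ and treat $Y$ as $\mathrm{Bin}(K, p)$; the optional-stopping identity is what resolves this, but care is required in the large-$p$ regime where the crude bound $K \le t/2$ fails to capture that bad excursions are typically long, forcing $K$ to be much smaller than $t/2$.
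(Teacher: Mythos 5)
Your expectation bound via Kac's lemma is correct and is a nice observation: $\E[Y] \le 1 + t/(C+1)$ does follow from $2m \ge d(u) + d_B(u)(C+1)$. However, there is a genuine gap in the concentration step, and I do not think your sketch closes it. The optional-stopping identity $\E[e^{\theta Y}] = \E[(1+p(e^\theta-1))^K]$ is fine, but the only bound on $K$ you establish is the deterministic $K\le t/2$, which gives $\P[Y>C]\le (ept/(2C))^C$. When $p$ is close to $1$ (e.g.\ every neighbor of $u$ is big), this is vacuous for $C=\Theta(\sqrt t\,)$. Your proposed fix is to use $\E[\tau]\ge p(C+1)$ to ``control $K$,'' but that is only a first-moment bound on a single excursion length, whereas what the Chernoff argument needs is a \emph{high-probability} upper bound on $K$, i.e.\ a lower-tail bound on $\sum_{j\le C+1}\tau_j$. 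The return-time $\tau$ has no useful deterministic lower bound beyond $\tau\ge 2$ (a bad excursion can return in two steps with probability up to $1/(C+1)$), and I see no simple way to show the needed stochastic domination of $\tau$ by a heavy-enough distribution from the degree structure alone. So the ``small case split'' is the crux of the proof, not a routine detail, and as written the argument does not go through.

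The idea your proof is missing is the paper's \emph{path-reversal} trick, which uses undirectedness in a much more targeted way than Kac's lemma. One writes $\P[u\text{ fails}\mid v_0=u]$ as a sum over $u\to u$ walks of their probabilities, and observes that because the graph is undirected, reversing each walk leaves its probability unchanged. Under reversal, the event ``$v_i=u$ and the step out of $u$ is to a big vertex'' becomes ``$v'_i=u$ and the step \emph{into} $u$ came from a big vertex.'' For the reversed walk this is a per-step event whose conditional probability, given any prefix, is at most $1/d(v'_{i-1})<1/C$ when $v'_{i-1}\in B$ (and $0$ otherwise). Hence the count of such steps is stochastically dominated by $\mathrm{Bin}(t-1,1/C)$, and the binomial tail $\binom{t-1}{C}(1/C)^C < (et/C^2)^C$ immediately gives the claimed bound with $C=O(\sqrt t\cdot q/\log q)$. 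The reversal is exactly what converts the problematic ``$p$ can be close to $1$'' forward transition into a ``$\le 1/C$ per step'' backward transition, which is the step your forward-excursion analysis cannot reproduce.
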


\begin{proof}
Let $d_0(u) = |  \{v : (u,v) \in E_0\} |$. 

For any $u \in V$, 
\begin{align*}
&\P[\text{$u$ fails} \mid \text{$v_0=u$}]\\
\le {}&\P[\text{$u$ fails} \mid \text{$v_0=u, (v_0,v_1) \in E_0$}].
\end{align*}

We rewrite this probability as the sum of probabilities of possible random walks in which $u$ fails. 
Recall that $u$ fails if and only if $|\{i: v_i=u, (v_i,v_{i+1}) \in E_0\}| \ge C+1$.
In the summation over possible random walks, we only keep the shortest prefix $(v_0,\dots,v_k)$ in which $u$ fails, i.e., the last step $(v_{k-1},v_k)$ is the $(C+1)$-st time walking along an unimportant arc outgoing from $u$. We have
\begin{align}
 &\P[\text{$u$ fails} \mid \text{$v_0=u, (v_0,v_1) \in E_0$}] \nonumber \\
={}&\sum_{k\le t}\sum_{\text{walk}(v_0,\dots,v_k)}\1 \bigg [\text{$v_0=v_{k-1}=u, \,\,(v_0,v_1),(v_{k-1},v_k)\in E_0,$} \nonumber \\[-6\jot] 
& \hspace{4cm}\text{$|\{i: v_i = u,(v_i, v_{i+1})\in E_0\}|=C+1$} \bigg ]\frac{1}{d_0(u)}\prod_{i=1}^{k-1}\frac{1}{d(v_i)}\nonumber \\
={}&\sum_{k\le t}\sum_{\text{walk}(v_0,\dots,v_{k-1})}\1 \bigg[\text{$v_0=v_{k-1}=u, \,\,(v_0,v_1)\in E_0,$} \nonumber\\[-6\jot]
&\hspace{5cm}\text{$|\{i: v_i = u, (v_i,v_{i+1})\in E_0\}|=C$}\bigg ]\prod_{i=1}^{k-1}\frac{1}{d(v_i)} \label{sum}.
 \end{align}
 Let $v'_i = v_{k-1-i}$. Since the graph is undirected, the vertex sequence $(v'_0,\dots,v'_{k-1})$ (the reversal of walk $(v_0,\dots,v_{k-1})$) is also a walk starting from and ending at $u$.  So the summation~(\ref{sum}) equals 
\begin{align*}
&
\sum_{k\le t}\sum_{\text{walk}(v'_0,\dots,v'_{k-1})}\1\bigg [\text{$v'_0=v'_{k-1}=u, \,\,(v'_{k-1},v'_{k-2})\in E_0,$}\nonumber \\[-6\jot]
&\hspace{6cm}\text{$|\{i: v'_i=u, (v'_i, v'_{i-1}) \in E_0\}|=C$}\bigg ]\prod_{i=0}^{k-2}\frac{1}{d(v'_i)}\\
={}&\underset{\text{random walk $(v'_0,\dots,v'_{t-1}) $}}{\P}\Big [\,|\{i: v'_{i}=u, (v'_i,v'_{i-1}) \in E_0\}|\ge C \,\Big \vert\, v'_0=u\Big ].
\end{align*}
Recall that $(v'_i,v'_{i-1}) \in E_0$ if and only if $v'_{i-1} \in B$. For any $1\le i \le t-1$ and any fixed prefix $v'_0,\dots,v'_{i-1}$,
\begin{align}
& \P \big [v'_i=u, (v'_i,v'_{i-1}) \in E_0 \, \big \vert \, v'_0,\dots,v'_{i-1}\big ] \nonumber\\
 \le {} & \1[v'_{i-1} \in B]\cdot \frac{1}{d(v'_{i-1})} \nonumber \label{cond}\\
 < {}&\frac{1}{C}.
\end{align}

Hence the probability that $|\{1\le i \le t-1: v'_i=u, (v'_i,v'_{i-1}) \in E_0\}|\ge C$ is at most 
\begin{align*}
&\binom{t-1}{C}\left (\frac{1}{C}\right )^C\\
\le {}& \left ( \frac{e(t-1)}{C}\right )^C\left (\frac{1}{C}\right )^C\\
< {}& \left ( \frac{et}{C^2}\right )^C.
\end{align*}

We set $C = \left \lceil 4 \sqrt{t}\ q/\log q \right \rceil$, where $q = 2 + \log(1/\delta)/\sqrt{t} >2$. Notice that  $q/\log^2 q > 1/4$. Then 
\begin{align*}
 C\log \left (\frac{C^2}{et}\right ) \ge  \frac{4\sqrt{t} q}{\log q} \log \left ( \frac{16q^2}{e\log^2 q}\right ) >\frac{4\sqrt{t} q}{\log q} \log (4q/e) > 4\sqrt{t}q > \log(1/\delta),
\end{align*}
so 
\begin{align*}
  \left ( \frac{et}{C^2}\right )^C < \delta.
\end{align*}
Hence we have made $\P[\text{$u$ fails} \mid v_0=u]< \delta$ by choosing $C = O(\sqrt{t}q/\log q )$.
\end{proof}

\begin{theorem}
We can simulate a $t$-step random walk on a simple undirected graph in the insertion-only model within error $\eps$ using
$O\left ( n \sqrt{t} \cdot \frac{q}{\log q} \right )$ 
words of memory, where $q = 2+\frac{\log(1/\eps)}{\sqrt{t}}$.
\end{theorem}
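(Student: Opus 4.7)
The plan is to combine the two preceding lemmas with a standard coupling-style argument relating the simulation error to the failure probability. First I would establish that the $\ell_1$ error is at most twice the algorithm's overall failure probability. Let $\P_w$ be the output distribution and write $\P_w(w) = \mathcal{RW}_{v_0,t}(w)\cdot \1[w \text{ is non-failing}] + r(w)$, where $r$ accounts for whatever arbitrary output the algorithm emits on failure and satisfies $\sum_w r(w) = P_{\text{fail}}$. Summing $|\P_w(w)-\mathcal{RW}_{v_0,t}(w)|$ and using the triangle inequality gives $|\P_w - \mathcal{RW}_{v_0,t}|_1 \le \sum_w r(w) + \sum_{w \text{ failing}} \mathcal{RW}_{v_0,t}(w) \le 2P_{\text{fail}}$. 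So it suffices to force $P_{\text{fail}} \le \eps/2$.

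Next I would set $\delta = \eps/(2t)$ and apply Lemma~\ref{union}, which reduces the task to ensuring $\P[u \text{ fails} \mid v_0 = u] \le \delta$ for every vertex $u$. Lemma~\ref{capa} tells us this can be achieved by choosing the sampler capacity $C = O(\sqrt{t}\cdot q'/\log q')$ with $q' = 2 + \log(1/\delta)/\sqrt{t} = 2 + \log(2t/\eps)/\sqrt{t}$. The one subtle bookkeeping step is to confirm that this $q'$ is within a constant factor of the $q$ in the theorem statement: since $\log(2t)/\sqrt{t} = O(1)$ uniformly in $t$, we have $q' \le q + O(1) = \Theta(q)$ (using $q \ge 2$), and hence $q'/\log q' = O(q/\log q)$.

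Finally I would invoke the already-established space bound for the data structures. The set $E_1$ of important arcs has size at most $|S|\cdot C \le nC$, and each of the $n$ per-vertex reservoir samplers uses $O(C)$ words, so the total storage throughout \textsc{ProcessInput} is $O(nC) = O(n\sqrt{t}\cdot q/\log q)$ words. Combined with the error bound $|\P_w - \mathcal{RW}_{v_0,t}|_1 \le 2P_{\text{fail}} \le 2t\delta = \eps$, this yields the theorem.

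The main obstacle is really just the first step, but it is mild: one must be careful that ``failure'' refers to the event detected by the algorithm (some $v$ runs out of cached unimportant arcs) rather than the existence of a failing vertex in a sampled walk, and that on the complement the algorithm's output genuinely equals a $\mathcal{RW}_{v_0,t}$-sample. Once that is checked, the rest is a direct plug-in of Lemmas~\ref{union} and~\ref{capa} with $\delta = \eps/(2t)$.
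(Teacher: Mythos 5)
Your proposal is correct and follows essentially the same route as the paper: set $\delta = \eps/(2t)$, apply Lemma~\ref{union} to reduce to per-vertex failure probabilities, apply Lemma~\ref{capa} to pick $C$, and observe that the $\ell_1$ error is at most twice the algorithm's failure probability. The only difference is that you spell out two details the paper leaves implicit—the coupling argument bounding $|\P_w - \mathcal{RW}_{v_0,t}|_1$ by $2P_{\text{fail}}$ (the paper states this bound just before Lemma~\ref{union} without derivation), and the check that $q' = 2 + \log(2t/\eps)/\sqrt{t}$ is within a constant factor of $q = 2 + \log(1/\eps)/\sqrt{t}$ since $\log(2t)/\sqrt{t} = O(1)$—both of which are handled correctly.
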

\begin{proof}
 The theorem follows from Lemma~\ref{union} and Lemma~\ref{capa} by setting $\delta = \frac{\eps}{2t}$.
\end{proof}

\subsection{On graphs with multiple edges}
\label{secmulti}
When the undirected graph contains multiple edges, condition~(\ref{cond}) in the proof of Lemma~\ref{capa} may not hold, so we need to slightly modify our algorithm.  

We still maintain the multiset $E_1$ of important arcs.
Whether an arc is important will be  determined by our algorithm. 
(This is different from the previous algorithm, where important arcs were simply defined as $(u,v)$ with $d(v)\le C$.)
We will ensure that condition~(\ref{cond}) still holds, i.e., for any $u\in V$ and any fixed prefix of the random walk $v_0,\dots,v_{i-1}$, 
\begin{align}
\P \big [\text{$(v_{i},v_{i-1}) \notin E_1$, and $v_{i}= u$} \,\big \vert\,  v_0,\dots,v_{i-1}\big ] < 1/C. \label{newcond}
\end{align}

Note that there can be both important arcs  and unimportant arcs from $u$ to $v$. Let $f(u,v)$ denote the number of undirected edges between $u,v$. Then there are $f(u,v)$ arcs $(u,v)$. 
Suppose $f_1(u,v)$ of these arcs are important, and $f_0(u,v) = f(u,v)-f_1(u,v)$ of them are unimportant. Then we can rewrite condition~(\ref{newcond}) as
\begin{align}
 \frac{f_0(u,v_{i-1})}{d(v_{i-1})} < 1/C, \label{newnewcond}
\end{align}
for every $u,v_{i-1} \in V$.

Similarly as before, we need to store the multiset $E_1$ using only $O(nC)$ words of space. 
And we need to sample with replacement $C$ unimportant arcs $a_{u,1},\dots,a_{u,C}$ outgoing from $u$, for every $u \in V$. 
Finally we use the procedure \textsc{SimulateRandomWalk} in Figure~\ref{fig:simulation} to simulate a random walk.

The multiset $E_1$ is determined as follows: For every vertex $v\in V$, we run Misra-Gries algorithm \cite{misra1982finding} on the sequence of  all $v$'s neighbors.
 We will obtain a list $L_v$ of at most $C$ vertices, such that for every vertex $u\notin L_v$, $\frac{f(u,v)}{d(v)} < \frac{1}{C}$. Moreover, we will get a frequency estimate $A_v(u)>0$ for every $u\in L_v$, such that $0 \le f(u,v) - A_v(u) < \frac{d(v)}{C}$.
 Assuming $A_v(u)=0$ for $u\notin L_v$,   we can satisfy condition~(\ref{newnewcond}) for all $u\in V$ by setting $f_1(u,v)=A_v(u)$.  Hence we have determined all the important arcs, and they can be stored in $O(\sum_{v}|L_v|) = O(nC)$ words. \, To sample from the unimportant arcs, we simply insert the arcs discarded by Misra-Gries algorithm into the samplers.  \, The pseudocode is given in Figure~\ref{fig:processinputmulti}.

\begin{figure}
\begin{center}
\begin{algorithmic}
\Procedure{InsertArc}{$u,v$}
\State{$d(v) \gets d(v)+1$}
\If{$u \in L_v$}
  \State{$A_v(u) \gets A_v(u)+1$}
  \Else
  \State{Insert $u$ into $L_v$}
  \State{$A_v(u) \gets 1$}
  \If{$|L_v|\ge C+1$}
  \For {$w \in L_v$}
  \State{Feed arc $(w,v)$ into $w$'s sampler}
  \State{$A_v(w) \gets A_v(w) -1$}
  \If{$A_v(w)=0$}
  \State{Remove $w$ from $L_v$}
  \EndIf
  \EndFor
  \EndIf
\EndIf
\EndProcedure

\Procedure{ProcessInput}{} 

\For{$u \in V$}
      \State{$d(u) \gets 0$} 
      \State{Initialize $u$'s sampler (initially empty) which maintains $a_{u,1},\dots,a_{u,C}$}
      \State{Initialize empty list $L_u$}
\EndFor
\For{undirected edge $(u,v)$ in the input stream} 
\State{\textsc{InsertArc}$(u,v)$}
\State{\textsc{InsertArc}$(v,u)$}
\EndFor
\State{$E_1 \gets \bigcup_{v\in V} \bigcup_{u\in L_v}\{A_v(u)\text{ copies of arc }(u,v)\}$}\Comment{Multiset of important arcs}
\EndProcedure
\end{algorithmic}
\caption{Pseudocode for processing the input stream (for undirected graphs with possibly multiple edges)} \label{fig:processinputmulti}
\end{center}
\end{figure}

 \begin{lemma}
 After  \textsc{ProcessInput} (in Figure~\ref{fig:processinputmulti}) finishes, $|L_v|\le C$. For every $u \in L_v$, $0 \le f(u,v) -A_v(u) \le \frac{d(v)}{C+1}$. For every $u\notin L_v$, $f(u,v) \le \frac{d(v)}{C+1}$. 
 \end{lemma}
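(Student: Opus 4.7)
The lemma essentially repackages the standard guarantees of the Misra--Gries heavy-hitters sketch with $C$ counters, applied per vertex $v$ to the length-$d(v)$ stream of $v$'s neighbors. My plan is a potential-function argument tracking the number $D_v$ of decrement rounds (executions of the inner \textbf{for}-loop over $L_v$ in \textsc{InsertArc}) and the quantity $\sum_u A_v(u)$.

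For the bound $|L_v|\le C$, I would argue it as an invariant preserved by every call \textsc{InsertArc}$(\cdot,v)$. The only step that can increase $|L_v|$ is the insertion of a fresh item $u$ with $A_v(u)=1$, which at worst brings $|L_v|$ to $C+1$; this immediately triggers the decrement branch, and since the newly inserted $u$ has $A_v(u)=1$, it is decremented to $0$ and removed, so $|L_v|\le C$ at the end of the call.

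For the frequency bounds I would establish the invariant
\begin{equation*}
  \sum_{u\in L_v} A_v(u) \;=\; d(v) - (C+1)\,D_v,
\end{equation*}
which holds because each arc insertion contributes $+1$ to the counter sum (via an increment or the initialization $A_v(u)\gets 1$) while each decrement round subtracts exactly $C+1$ (one per counter, as $|L_v|=C+1$ at that moment). Non-negativity of the left-hand side then gives $D_v\le d(v)/(C+1)$. Now define the \emph{error} of $u$ at vertex $v$ to be $f(u,v)-A_v(u)$, where $A_v(u):=0$ for $u\notin L_v$. This error starts at $0$, is never decreased by an arc insertion (an increment of an existing counter cancels the $+1$ in $f$; a fresh insertion sets $A_v(u)$ from $0$ to $1$ and likewise cancels), and is raised by exactly $1$ during each decrement round in which $u$ participates, including the transitional case where $u$ is dropped from $L_v$ (pre-decrement $A_v(u)=1$, error $=f(u,v)-1$; post-decrement $A_v(u)=0$, error $=f(u,v)$). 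Hence the error of $u$ is at most $D_v\le d(v)/(C+1)$, and non-negativity $f(u,v)-A_v(u)\ge 0$ is immediate since $A_v(u)$ is only incremented when $u$ appears.

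The only subtle point, which I would flag carefully, is the bookkeeping when an item $u$ is removed from $L_v$ and later re-added with $A_v(u)=1$: one must verify that our convention $A_v(u):=0$ outside $L_v$ makes the error a monotonically non-decreasing function of time, so that the single bound $D_v\le d(v)/(C+1)$ controls the final error regardless of how many times $u$ has cycled in and out of $L_v$.
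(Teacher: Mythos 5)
Your proof is correct and follows essentially the same route as the paper: both count decrement rounds, observe that the final error $f(u,v)-A_v(u)$ equals the number of rounds in which $u$ participates, and bound the round count by $d(v)/(C+1)$ via the fact that each round removes $C+1$ units from a nonnegative counter sum fed by $d(v)$ arc insertions. The paper expresses this more tersely (bounding the count directly by $\tfrac{1}{C+1}\sum_W |W| \le \tfrac{d(v)}{C+1}$ over the per-round snapshots $W$ of $L_v$), whereas you spell out the counter-sum invariant $\sum_{u}A_v(u)=d(v)-(C+1)D_v$ explicitly; the content is the same.
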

 \begin{proof}
 Every time the \textbf{for} loop in procedure \textsc{InsertArc} finishes, the newly added vertex $u$ must have been removed from $L_v$, so $|L_v| \le C$ still holds. Let $W= \{w_1,\cdots,w_{C+1}\}$ be the set of  vertices in $L_v$ before this \textbf{for} loop begins. Then for every $u\in V$, $f(u,v)-A_v(u)$ equals the number of times $u$ is contained in $W$       (assuming $A_v(u) = 0$ for $u \notin L_v$), which is at most $\frac{1}{C+1}\sum_W |W| \le \frac{d(v)}{C+1}$.
 \end{proof}
 \begin{corollary}
 
 \label{coro}
  Procedure \textsc{ProcessInput} in Figure~\ref{fig:processinputmulti} computes the multiset $E_1$ of important edges and stores it using $O(nC)$ words. It also samples with replacement $C$ unimportant arcs $a_{u,1},\dots,a_{u,C}$ outgoing from $u$, for every $u \in V$. 
  Moreover, 
\begin{align*}
 \frac{f_0(u,v)}{d(v)} < \frac{1}{C} \label{newnewcond}
\end{align*}
holds for every $u,v \in V$.
 \end{corollary}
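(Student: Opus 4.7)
The plan is to read each of the three claims off the preceding lemma together with a simple bookkeeping check on the pseudocode, since \textsc{ProcessInput} is essentially just running a Misra--Gries sketch per vertex and diverting the ``evicted'' mass into reservoir samplers. First I would dispose of the space bound: the lemma gives $|L_v|\le C$ for every $v$, so the multiset $E_1$ has at most $nC$ distinct underlying pairs $(u,v)$ (one for each entry of some $L_v$), each storable by its multiplicity $A_v(u)$ in one machine word; hence $E_1$ occupies $O(nC)$ words, and the $n$ reservoir samplers together use another $O(nC)$ words.

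Next I would verify that $u$'s sampler effectively draws $C$ independent samples with replacement from the multiset of unimportant arcs outgoing from $u$. The key observation is that \textsc{InsertArc} feeds arc $(u,v)$ into $u$'s sampler exactly when it decrements $A_v(u)$. Tracking the net accounting for $A_v(u)$ over the whole stream, every call to \textsc{InsertArc}$(u,v)$ contributes $+1$, either explicitly in the $u\in L_v$ branch or implicitly by initializing $A_v(u)=1$ when $u$ is added to $L_v$; overflow events contribute $-1$ whenever $u$ is in $L_v$ at that moment. The final balance is $A_v(u)$ (under the convention $A_v(u)=0$ once $u$ has been removed from $L_v$), so the total number of decrements, and hence the number of copies of arc $(u,v)$ fed into $u$'s sampler, equals $f(u,v)-A_v(u)=f_0(u,v)$. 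Thus $u$'s sampler sees precisely the multiset of unimportant arcs outgoing from $u$, and its $C$ capacity-$1$ reservoir sub-samplers produce $C$ independent uniform draws from this multiset.

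Finally, for the inequality $f_0(u,v)/d(v)<1/C$, the lemma gives $f(u,v)-A_v(u)\le d(v)/(C+1)$ when $u\in L_v$ and $f(u,v)\le d(v)/(C+1)$ when $u\notin L_v$; extending the convention $A_v(u)=0$ to the latter case, both bounds reduce to $f_0(u,v)\le d(v)/(C+1)<d(v)/C$. The only place where a subtlety could hide is the bookkeeping in the second step, namely the case where a freshly added $u$ is itself among the vertices whose counter is decremented in the same call to \textsc{InsertArc}, so that $(u,v)$ is inserted into $L_v$ and immediately evicted into $u$'s sampler; confirming that this edge case is correctly handled by the ``$+1$ implicit, then $-1$ in the loop'' counting is the main---and only nontrivial---point I anticipate.
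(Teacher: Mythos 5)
Your proof is correct and matches the paper's (implicit) reasoning: the corollary is stated without a separate proof precisely because the inequality $f_0(u,v)\le d(v)/(C+1)<d(v)/C$ is an immediate consequence of the preceding lemma, and the accounting showing that exactly $f(u,v)-A_v(u)=f_0(u,v)$ copies of arc $(u,v)$ are fed into $u$'s capacity-$1$ reservoir sub-samplers is the intended bookkeeping argument. Your explicit handling of the edge case where a freshly inserted $u$ is immediately evicted in the same call is a valid and careful elaboration of what the paper leaves implicit.
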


Now we analyze the failure probability of  \textsc{SimulateRandomWalk} (in Figure~\ref{fig:simulation}), similar to Lemma~\ref{capa}.
\begin{lemma}
\label{multicapa}
We can choose integer parameter 
 $C = O\left (  \sqrt{t}\cdot \frac{q}{\log q} \right ) $, where $q = 2+\frac{\log(1/\delta)}{\sqrt{t}}$, 
so that $\P[\text{$u$ fails} \mid \text{$v_0=u$}] \le \delta $ holds for every $u\in V$.
\end{lemma}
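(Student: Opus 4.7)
The plan is to follow the proof of Lemma~\ref{capa} essentially verbatim, and replace only the one step where the simple-graph structure was used, namely the bound $1/d(v'_{i-1}) < 1/C$ that gave inequality~(\ref{cond}). In the multi-edge setting the important/unimportant partition is no longer determined by the degree of the endpoint, so we must instead appeal to Corollary~\ref{coro}, which was engineered precisely to preserve the key inequality in the form $f_0(u,v)/d(v) < 1/C$.

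Concretely, I would first condition on $(v_0,v_1) \in E_0$ and rewrite $\P[u \text{ fails} \mid v_0 = u, (v_0,v_1) \in E_0]$ as the same sum~(\ref{sum}) over shortest-prefix walks $(v_0,\dots,v_k)$ in which the last step is the $(C{+}1)$-st unimportant step out of $u$. The factorization of the walk probability into $\frac{1}{d_0(u)}\prod_{i=1}^{k-1}\frac{1}{d(v_i)}$ only uses that each transition from $v_{i-1}$ is uniform over its $d(v_{i-1})$ outgoing arcs, which remains true when $f(u,v)$ counts multiplicities.

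Next I would apply the time-reversal trick: since every undirected edge contributes opposite arcs with the same multiplicity, the reversed sequence $v'_i = v_{k-1-i}$ is still a walk and the product of $1/d(v_i)$ factors is unchanged, so the sum equals $\P[\,|\{i : v'_i = u,\ (v'_i,v'_{i-1})\in E_0\}| \ge C \mid v'_0 = u\,]$ for a $(t{-}1)$-step forward random walk. For any fixed prefix $v'_0,\dots,v'_{i-1}$, the one-step conditional probability satisfies
\begin{align*}
\P\big[v'_i = u,\ (v'_i,v'_{i-1}) \in E_0 \,\big|\, v'_0,\dots,v'_{i-1}\big] = \frac{f_0(u,v'_{i-1})}{d(v'_{i-1})} < \frac{1}{C},
\end{align*}
where the equality uses that $f_0(u,v'_{i-1})$ counts exactly the unimportant arcs from $v'_{i-1}$ to $u$, and the strict inequality is Corollary~\ref{coro}. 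This is the replacement for inequality~(\ref{cond}).

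From this point the calculation is identical to Lemma~\ref{capa}: a binomial/union bound over the at most $\binom{t-1}{C}$ choices of which steps witness the failure yields $\big(\frac{et}{C^2}\big)^C$, and the same choice $C = \lceil 4\sqrt{t}\,q/\log q\rceil$ with $q = 2 + \log(1/\delta)/\sqrt{t}$ drives this quantity below $\delta$. I do not expect a real obstacle here; the entire point of the Misra-Gries preprocessing in \textsc{ProcessInput} (Figure~\ref{fig:processinputmulti}) and of Corollary~\ref{coro} was to reinstate the inequality $f_0(u,v)/d(v) < 1/C$ that made the simple-graph proof go through, so the multi-edge case reduces to a one-line substitution inside the earlier argument.
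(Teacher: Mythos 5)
Your proposal is essentially the paper's proof: condition on the first step being unimportant, reverse the walk, bound each step's conditional probability via Corollary~\ref{coro}, and finish with the same binomial tail bound and choice of $C$. One point where you are being slightly loose and the paper is more careful: in the multigraph setting, ``the $i$-th step is along an unimportant arc'' is \emph{not} a function of the vertex sequence $(v_0,\dots,v_k)$ alone, since there can simultaneously be important and unimportant arcs between the same pair of vertices. So you cannot literally reuse sum~(\ref{sum}) over vertex walks with factors $\prod 1/d(v_i)$; the paper instead sums over vertex sequences \emph{together with} a type sequence $(s_1,\dots,s_k)\in\{0,1\}^k$, using numerators $f_{s_{i+1}}(v_i,v_{i+1})$, which then reverse cleanly (your remark about uniformity over the $d(v_{i-1})$ outgoing arcs suggests you have the equivalent ``sum over arc sequences'' picture in mind, but you should state it). Likewise your event ``$(v'_i,v'_{i-1})\in E_0$'' should be read as ``the reverse of the traversed arc is unimportant,'' a choice carried in the $s'_i$ variable; once that bookkeeping is explicit, the one-step bound $f_0(u,v'_{i-1})/d(v'_{i-1})<1/C$ from Corollary~\ref{coro} is exactly the right substitute for inequality~(\ref{cond}) and the rest of the computation is unchanged.
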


\begin{proof}
Let $d_0(u) = \sum_{v\in V} f_0(u,v)$. \ \   
As before, we rewrite this probability as a sum over possible random walks. Here we distinguish between important and unimportant arcs. Denote $s_i = \1[\text{step $(v_{i-1},v_i)$ is along an important arc}]$. 
Then for any $u \in V$, 
\begin{align*}
&\P[\text{$u$ fails} \mid \text{$v_0=u$}]\\
\le{} &\P[\text{$u$ fails} \mid \text{$v_0=u$, arc $(v_0,v_1)$ is unimportant}]\\
={}&\frac{d(u)}{d_0(u)}\sum_{k\le t}\sum_{(v_0,\dots,v_k)}\sum_{s_1,\dots,s_k}\1\bigg [\text{$v_0=v_{k-1}=u,\, s_1=s_k=0,$}\nonumber\\[-6\jot]
&\hspace{5cm}\text{$ |\{i: v_i = u, s_{i+1}=0\}|=C+1$} \bigg ]\prod_{i=0}^{k-1}\frac{f_{s_{i+1}}(v_i,v_{i+1})}{d(v_i)}\\[+1\jot]
={}&\sum_{k\le t}\sum_{(v_0,\dots,v_{k-1})}\sum_{s_1,\dots,s_{k-1}}\1\bigg [\text{$v_0=v_{k-1}=u,\,\, s_1=0,$}\nonumber\\[-6\jot]
&\hspace{5cm}\text{$|\{i: v_i = u, s_{i+1}=0\}|=C$}\bigg ]\prod_{i=0}^{k-2}\frac{f_{s_{i+1}}(v_i,v_{i+1})}{d(v_i)}.
 \end{align*}
 Let $v'_i = v_{k-1-i}, s'_i = s_{k-i}$. Then this sum  equals
\begin{align*}
& \sum_{k\le t}\sum_{(v'_0,\dots,v'_{k-1})} \sum_{s'_1,\dots,s'_{k-1}}\1\bigg [\text{$v'_0=v'_{k-1}=u,\,\, s'_{k-1}=0,$}\nonumber\\[-6\jot]
&\hspace{6cm}\text{$|\{i: s'_i =0 , v'_{i}=u\}|=C$}\bigg ]\prod_{i=1}^{k-1}\frac{f_{s'_{i}}(v'_{i},v'_{i-1})}{d(v'_{i-1})}\\[+3\jot]
={}&\underset{\text{random walk $(v'_0,\dots,v'_{t-1}) $}}{\P}\Big [\,|\{i: \text{$v'_{i}=u$, arc $(v'_{i},v'_{i-1})$ is unimportant}\}|\ge C \, \Big \vert \, v'_0=u\Big ].
\end{align*}
Notice that for any $i$ and any fixed prefix $v'_0,\dots,v'_{i-1}$,
\begin{align*}
 \P \Big [\text{$v'_{i}=u$, arc $(v'_{i},v'_{i-1})$ is unimportant}\, \Big \vert \, v'_0,v'_1,\dots,v'_{i-1}\Big ] = \frac{f_0(u,v'_{i-1})}{d(v'_{i-1})} <  \frac{1}{C}
\end{align*}
by Corollary~\ref{coro}. The rest of the proof is the same as in Lemma~\ref{capa}.
\end{proof}
\begin{theorem}
\label{mainthmmulti}
We can simulate a random walk on an undirected graph with possibly multiple edges in the insertion-only model within error $\eps$ using $O\left ( n \sqrt{t} \cdot \frac{q}{\log q} \right )$ 
words of memory, where $q = 2+\frac{\log(1/\eps)}{\sqrt{t}}$.   
\end{theorem}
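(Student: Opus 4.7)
The plan is to follow the same template as the proof of the simple-graph theorem at the end of Section~3.2, substituting Lemma~\ref{multicapa} in place of Lemma~\ref{capa}. The simulation pipeline is now \textsc{ProcessInput} from Figure~\ref{fig:processinputmulti} followed by \textsc{SimulateRandomWalk} from Figure~\ref{fig:simulation}, and Corollary~\ref{coro} supplies precisely the hypothesis $f_0(u,v)/d(v)<1/C$ that Lemma~\ref{multicapa} relied on. So modulo this single substitution, everything carries over verbatim.

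First I would set $\delta = \eps/(2t)$ and invoke Lemma~\ref{multicapa} to choose an integer capacity $C = O\bigl(\sqrt{t}\cdot q'/\log q'\bigr)$, where $q' = 2+\log(2t/\eps)/\sqrt{t}$, so that $\P[u\text{ fails}\mid v_0=u]\le \eps/(2t)$ for every $u\in V$. Next I would apply Lemma~\ref{union}, which is a purely probabilistic statement about random walks (its proof uses only the Markov property and an upper bound on the expected number of distinct vertices visited, neither of which rely on the simple-graph hypothesis), to conclude that $\P[\text{at least one vertex fails}\mid v_0=s]\le t\cdot\eps/(2t)=\eps/2$ for every starting vertex $s$.

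Third I would translate the failure-probability bound into the required $\ell_1$ bound. By construction of \textsc{SimulateRandomWalk}, for every walk $w$ in which no vertex fails, the algorithm outputs $w$ with probability exactly $\mathcal{RW}_{v_0,t}(w)$; hence the output distribution $\P_w$ agrees with $\mathcal{RW}_{v_0,t}$ on all non-failing walks, and $|\P_w-\mathcal{RW}_{v_0,t}|_1$ is at most twice the total $\mathcal{RW}_{v_0,t}$-mass on failing walks, i.e.\ at most $\eps$.

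Finally I would verify the space bound: Corollary~\ref{coro} stores the multiset $E_1$ in $O(nC)$ words, and the $n$ samplers for unimportant arcs use another $O(nC)$ words, giving $O(nC)$ overall. The one cosmetic step is to identify $q'=2+\log(2t/\eps)/\sqrt{t}$ with $q=2+\log(1/\eps)/\sqrt{t}$: since $\log(2t)/\sqrt{t}$ is bounded by an absolute constant and $x/\log x$ is eventually monotone, $q'/\log q' = O(q/\log q)$, and the bound cleans up to $O(n\sqrt{t}\cdot q/\log q)$ words. I do not foresee a genuinely hard step here; the analytic heavy lifting was already done in Lemma~\ref{multicapa} and in verifying the Misra--Gries-based construction of $E_1$.
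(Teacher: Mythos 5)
Your proposal is correct and follows exactly the route the paper takes (the paper's own proof is a one-line citation of Lemma~\ref{union} and Lemma~\ref{multicapa} with $\delta=\eps/(2t)$); you have simply spelled out the routine verification, including the $\ell_1$-distance bookkeeping, the $O(nC)$ space accounting from Corollary~\ref{coro}, and the cosmetic simplification of $q'=2+\log(2t/\eps)/\sqrt{t}$ to $q=2+\log(1/\eps)/\sqrt{t}$, all of which check out.
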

\begin{proof}
 The theorem follows from Lemma~\ref{union} and Lemma~\ref{multicapa} by setting $\delta = \frac{\eps}{2t}$.
\end{proof}
\section{Turnstile model}
In this section we consider the turnstile model where both insertion and deletion of edges can appear. 
\begin{lemma}[$\ell_1$ sampler in the turnstile model, \cite{jayaram2018perfect}]
Let $f \in \R^n$ be a vector defined by a stream of updates to its coordinates of the form $f_i \gets f_i + \Delta$, where $\Delta$ can either be positive or negative. There is an algorithm which reads the stream and returns an index $i \in [n]$ such that for every $j \in [n]$, 
\begin{equation}
\label{sampleprob}
 \P[i=j] = \frac{|f_j|}{\|f\|_1} + O(n^{-c}), 
\end{equation}
where $c\ge 1$ is some arbitrary large constant. It is allowed to output \textsc{Fail}  with probability $\delta$, and in this case it will not output any index. The space complexity of this algorithm is $O(\log^2 n \log (1/\delta))$ bits.
\end{lemma}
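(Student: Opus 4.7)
I would build on the precision-sampling paradigm. At the start of the stream, draw i.i.d.\ $u_1,\ldots,u_n \sim \mathrm{Exp}(1)$, stored implicitly via a pseudorandom generator so that they occupy only short seeds. Consider the (implicitly defined) rescaled vector $z \in \R^n$ with $z_i = f_i / u_i$. The classical minimum-of-exponentials identity gives
\[
\P\!\left[\arg\max_i |z_i| = j\right] = \frac{|f_j|}{\|f\|_1},
\]
exactly the $\ell_1$-sampling distribution. So it suffices to locate $\arg\max_i |z_i|$ under the turnstile updates $f_i \mathrel{+}= \Delta$, which translate to $z_i \mathrel{+}= \Delta/u_i$ since the scalings are fixed upfront.

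The plan is to maintain a linear sketch of $z$ that supports heavy-hitter recovery. A Count-Sketch of width $\mathrm{poly}(\log n)$ and depth $O(\log n)$ will return a candidate index $i^\ast$ together with an estimate $\tilde z_{i^\ast}$; a second, independent $\ell_2$ sketch estimates $\|z_{-i^\ast}\|_2$. If $|\tilde z_{i^\ast}|$ dominates the tail estimate by a large enough multiplicative gap, output $i^\ast$; otherwise output \textsc{Fail}. Because the $u_i$'s are exponential, with probability $1 - n^{-\Omega(1)}$ the top coordinate of $|z|$ sits a $\Theta(\log n)$ factor above the bulk, so this gap-test succeeds with constant probability. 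Running $O(\log(1/\delta))$ independent parallel copies and returning the first successful one drives the failure probability below $\delta$, for a total of $O(\log^2 n \cdot \log(1/\delta))$ bits.

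The hard part is showing that, \emph{conditional} on not failing, the output distribution matches (\ref{sampleprob}) up to additive $O(n^{-c})$. Three bias sources must be controlled: (i) near-ties between the top two entries of $|z|$, which can confuse Count-Sketch; (ii) recovery error on atypically flat inputs; and (iii) rounding of the $u_i$'s to finite precision. The technical core is a coupling argument between the real output distribution and the idealized $\arg\max$ distribution, where each bad event is bounded by $O(n^{-c})$ using tail bounds on the order statistics of the $u_i$'s together with the standard Count-Sketch concentration. This constitutes the actual contribution of \cite{jayaram2018perfect}, and I would simply invoke it.
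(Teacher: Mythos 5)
The paper offers no proof of this lemma; it is an external black box cited to \cite{jayaram2018perfect}, and you, too, end by invoking that same reference, so the treatments coincide. Your preamble accurately sketches the precision-sampling construction underlying the cited result (exponential scalings, Count-Sketch heavy-hitter recovery, a gap test against a tail-norm estimate, and $O(\log(1/\delta))$ independent parallel copies). One factual quibble worth noting: the top coordinate of $|z|$ dominates the second-largest by a fixed constant factor only with \emph{constant} probability, not with probability $1-n^{-\Omega(1)}$; the non-negligible near-tie events are exactly why the gap test must be allowed to output \textsc{Fail} and why the $O(\log(1/\delta))$ repeats are needed, and your own stated conclusion (constant success probability per trial) is the correct one, so this misstatement does not affect the argument.
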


\begin{remark}
For $\eps \ll 1/n$, the $O(n^{-c})$ error term in (\ref{sampleprob}) can be reduced to $O(\eps^c)$ by running the $\ell_1$ sampler on $f \in \R^{\lceil 1/\eps \rceil}$, using $O(\log^2 (1/\eps) \log(1/\delta))$ bits of space.
\end{remark}

We will use the $\ell_1$ sampler for sampling neighbors (with possibly multiple edges) in the turnstile model. The error term $O(n^{-c})$ (or $O(\eps^c)$) in (\ref{sampleprob}) can be ignored in the following discussion, by choosing sufficiently large constant $c$ and scaling down $\eps$ by a constant.

\subsection{Directed graphs}

\begin{theorem}
\label{turnstilealgdirected}
  We can simulate a $t$-step random walk on a directed graph in the turnstile model within error $\eps$ using $O(n(t+ \log\frac{1}{\eps})\log^2 \max\{n,1/\eps\})$ bits of memory.
\end{theorem}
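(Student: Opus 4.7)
The plan is to adapt the perfect-simulation algorithm of Theorem~\ref{algdirected} to the turnstile setting by replacing each reservoir sampler with a bank of $\ell_1$ samplers. For every vertex $u\in V$ and every index $i\in [t]$, I maintain an independent $\ell_1$ sampler $\mathcal{S}_{u,i}$ applied to the vector $(f(u,v))_{v\in V}$ of outgoing edge multiplicities at $u$: an input update $((u,v),\Delta)$ is forwarded as a $\Delta$-update to coordinate $v$ of each of $u$'s $t$ samplers. After the stream is consumed, simulate the walk exactly as in the insertion-only case: starting from $v_0$, on the $j$-th visit to a vertex $u$ read out $\mathcal{S}_{u,j}$ to obtain the next vertex, and abort (returning an arbitrary walk) if any queried sampler returns \textsc{Fail}.

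Correctness decomposes into two error sources. First, each $\ell_1$ sampler only produces each coordinate with the correct probability up to an additive error; by the remark following the $\ell_1$-sampler lemma, choosing its domain size and the constant $c$ large enough drives this per-step $\ell_1$ slack polynomially small, so by the triangle inequality its cumulative contribution to $|\P_w - \mathcal{RW}_{v_0,t}|_1$ over $t$ steps is at most $\eps/2$. Second, each queried sampler independently answers \textsc{Fail} with its own probability $\delta$; conditioning inductively on the success of earlier queries, the walk touches at most $t$ samplers, so a union bound makes the overall failure probability at most $t\delta \le \eps/2$ for $\delta = \eps/(2t)$. On the event that no sampler fails, the output is distributed as $\mathcal{RW}_{v_0,t}$ up to the first error source.

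Each $\ell_1$ sampler uses $O(\log^2\max\{n,1/\eps\} \cdot \log(1/\delta))$ bits, and there are $nt$ of them, so a naive tally would pay $O(nt \log^2\max\{n,1/\eps\} \cdot \log(t/\eps))$. To reach the tighter additive form $n(t+\log(1/\eps))\log^2\max\{n,1/\eps\}$ stated in the theorem, I would split the samplers into a base layer of $nt$ constant-failure-probability $\ell_1$ samplers (contributing $O(nt\log^2\max\{n,1/\eps\})$ bits) and a per-vertex boost layer of $O(\log(1/\eps))$ extra samplers reserved across all visits to that vertex, used to rescue occasional base-layer failures (contributing the remaining $O(n\log(1/\eps)\log^2\max\{n,1/\eps\})$ bits).

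The hardest part will be validating this boost-layer accounting: one has to show that an amortized budget of $O(\log(1/\eps))$ boost copies per vertex suffices to cover all base-layer failures incurred during the walk's (up to $t$) visits to that vertex, with probability at least $1-\eps/2$. This should follow from a Chernoff bound on the base-layer failure count per vertex together with a union bound over vertices, but care is needed to verify that the boost samplers remain independent of the walk's dynamically-chosen query pattern, since the walk's next vertex depends on the very samplers whose failures dictate when a boost copy is consumed.
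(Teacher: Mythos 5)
Your high-level plan matches the paper's — replace each reservoir sampler with a bank of $\ell_1$ samplers, and charge the simulation error to sampler bias plus sampler failure — but the boost-layer accounting you sketch has a genuine gap that the Chernoff bound you invoke cannot close. If each of the $t$ base-layer samplers at a vertex fails independently with constant probability (which is what an $O(\log^2\max\{n,1/\eps\})$-bit sampler gives you), then the number of base-layer failures incurred over $t$ visits to that vertex concentrates around $\Theta(t)$, not around $O(\log(1/\eps))$; no choice of constants makes $t + O(\log(1/\eps))$ constant-failure samplers produce $t$ successes with nontrivial probability. The paper avoids the base/boost split entirely: at each vertex $u$ it runs $C' = 2t + 16\log(2t/\eps)$ independent $\ell_1$ samplers, each with failure probability $\delta = 1/2$, and feeds the walk with the first $t$ that succeed. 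Chernoff then gives $\P[\text{fewer than $t$ succeed}] \le \eps/(2t)$, and since $\log t = O(t)$ one has $C' = O(t + \log(1/\eps))$, yielding the claimed $O(n(t+\log\frac{1}{\eps})\log^2\max\{n,1/\eps\})$ bits.

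Your second worry — that which samplers are consumed depends on the walk, which in turn depends on the samplers — is resolved in the paper by a static formulation that you should adopt rather than trying to repair the boost-layer idea. Whether $u$ has at least $t$ successful samplers is determined by the samplers' own success/failure randomness, fixed before the walk is simulated, and hence independent of whether the walk visits $u$. Declaring $u$ to \emph{fail} iff it has fewer than $t$ successful samplers \emph{and} lies in $\{v_0,\dots,v_{t-1}\}$ gives $\P[u\text{ fails}] \le \frac{\eps}{2t}\,\P[u \in \{v_0,\dots,v_{t-1}\}]$; summing over $u$ exactly as in Lemma~\ref{union} (using that the expected number of distinct vertices among $v_0,\dots,v_{t-1}$ is at most $t$) bounds the total failure probability by $\eps/2$, and the per-step $\ell_1$-sampler bias is absorbed as you describe by taking the constant $c$ large and rescaling $\eps$.
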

\begin{proof}
For every $u \in V$, we run $C' = 2t + 16\log (2t/\eps)$ independent $\ell_1$ samplers each having failure probability $\delta=1/2$. We use them to sample the outgoing edges of $u$ (as in the algorithm of Theorem~\ref{algdirected}). By Chernoff bound, the probability that less than $t$ samplers succeed is at most $\eps/(2t)$.

We say a vertex $u$ fails if $u$ has less than $t$ successful samplers, and  $u \in \{v_0,v_1,\dots,v_{t-1}\}$  (where $v_0,v_1,\dots,v_t$ is the random walk). Then $\P[\text{$u$ fails}] \le \frac{\eps}{2t} \P[u \in \{v_0,\dots,v_{t-1}\}]$. By union bound, $\P[\text{at least one vertex fails}] \le \frac{\eps}{2t}\sum_{u\in V} \P[u \in \{v_0,\dots,v_{t-1}\}] \le \frac{\eps}{2}$. Hence, with probability $1-\frac{\eps}{2}$, every vertex $u$ visited (except the last one) has at least $t$ outgoing edges sampled, so our simulation can succeed. The space usage is $O(nC' \log^2 \max\{n,1/\eps\} \log (1/\delta)) = O(n(t+ \log\frac{1}{\eps}) \log^2 \max\{n,1/\eps \})$ bits.
 \end{proof}

\subsection{Undirected graphs}
We slightly modify the \textsc{ProcessInput} procedure of our previous algorithm in Section~\ref{secmulti}. We will use the $\ell_1$ heavy hitter algorithm in the turnstile model.
\begin{lemma}[$\ell_1$ heavy hitter, \cite{cormode2005improved}]
 Let $f \in \R^n$ be a vector defined by a stream of updates to its coordinates of the form $f_i \gets f_i + \Delta$, where $\Delta$ can either be positive or negative. There is a randomized algorithm which reads the stream and returns a subset $L \subseteq [n]$ 
 such that $i \in L$ for every $|f_i| \ge \frac{\|f\|_1}{k}$, and $i \notin L$ for every $|f_i| \le \frac{\|f\|_1}{2k}$. Moreover it returns a frequency estimate $\tilde f_i$ for every $i \in L$, which satisfies $0 \le f_i  -\tilde f_i \le \frac{\|f\|_1}{2k}$.  The failure probability of this algorithm is $O(n^{-c})$. The space complexity is $O(k\log^2 n)$ bits.
\end{lemma}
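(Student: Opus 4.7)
Since this lemma is cited from \cite{cormode2005improved}, the paper itself likely invokes it black-box. If I needed to reproduce a proof, the plan would be to use the Count-Min sketch. Maintain an array of $d = O(\log n)$ rows and $w = O(k)$ columns of counters, together with a pairwise-independent hash function $h_j: [n] \to [w]$ for each row. Every update $f_i \gets f_i + \Delta$ triggers the addition of $\Delta$ to $C[j][h_j(i)]$ for all $j$. This occupies $dw = O(k \log n)$ counters, each bounded by $\mathrm{poly}(n)$ in magnitude, hence $O(k \log^2 n)$ bits of space as required.

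The estimator is $\hat f_i = \min_j C[j][h_j(i)]$. Since the graph application lies in the strict turnstile regime (multiplicities $f(e) \ge 0$), every counter is non-negative and $\hat f_i \ge f_i$ deterministically. For any fixed row $j$, $\E[C[j][h_j(i)] - f_i] = \|f\|_1/w$, so choosing $w = 4k$ gives per-row additive error at most $\|f\|_1/(4k)$ with probability $\ge 3/4$ by Markov; the minimum over $d = O(\log n)$ independent rows drives the failure probability down to $O(n^{-c-1})$ for a given coordinate. Union-bounding over all $n$ coordinates yields the stated $O(n^{-c})$ global failure probability for every point query simultaneously.

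I would then build $L$ by enumerating all $i \in [n]$, computing $\tilde f_i = \hat f_i - \|f\|_1/(4k)$, and including $i$ in $L$ iff $\hat f_i \ge 3\|f\|_1/(4k)$ (here $\|f\|_1$ can be read off exactly as the sum of any one row in the strict turnstile case). The threshold checks are routine: $f_i \ge \|f\|_1/k$ forces $\hat f_i \ge \|f\|_1/k > 3\|f\|_1/(4k)$, and $f_i \le \|f\|_1/(2k)$ yields $\hat f_i \le 3\|f\|_1/(4k)$, so the inclusion/exclusion rules match the statement. The one-sided frequency bound $0 \le f_i - \tilde f_i \le \|f\|_1/(2k)$ then follows from $f_i \le \hat f_i \le f_i + \|f\|_1/(4k)$. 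The only real subtlety, should one want a fully general turnstile model with possibly negative $f_i$, is that $\min$ no longer works as an estimator; in that case one replaces Count-Min by Count-Sketch (random $\pm 1$ updates and estimator $\mathrm{median}_j |C[j][h_j(i)]|$), which fits in the same space and gives the same additive error guarantee by a symmetric analysis.
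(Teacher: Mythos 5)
The paper does not prove this lemma; it invokes it as a black box from the cited reference \cite{cormode2005improved}, and your instinct that the intended algorithm is Count-Min sketch is exactly right. Your reconstruction is essentially the standard one and would serve as a correct proof modulo one arithmetic slip in the Markov step: with $w = 4k$ columns the per-row excess has expectation $\|f\|_1/(4k)$, so Markov only yields $\P[\text{excess} \ge \|f\|_1/k] \le 1/4$, i.e.\ error at most $\|f\|_1/k$ with probability $3/4$, not error at most $\|f\|_1/(4k)$. Since your construction of $L$ (threshold at $3\|f\|_1/(4k)$) and the frequency shift $\tilde f_i = \hat f_i - \|f\|_1/(4k)$ both rely on per-row error $\le \|f\|_1/(4k)$, you need $w = 16k$, or equivalently to aim only for the $\|f\|_1/(2k)$ error the lemma actually demands and set $w = 8k$ with threshold $\|f\|_1/(2k)$. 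Either fix preserves the $O(k\log^2 n)$ space bound. One further small inaccuracy in your closing remark: Count-Sketch (with $\pm 1$ signs and a median estimator) gives an $\ell_2$-style error guarantee, not the $\ell_1$ guarantee stated here; the correct general-turnstile analogue is Count-Min with a median estimator across rows. This is moot for the paper, however, since the turnstile model used there is strict ($f(e) \ge 0$ always), so the min estimator is the right one and your main argument stands.
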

\begin{remark}
For $\eps \ll 1/n$, the $O(n^{-c})$ failure probability of this $\ell_1$ heavy hitter algorithm can be reduced to $O(\eps^c)$ by running the algorithm on $f \in \R^{\lceil 1/\eps \rceil}$, using $O(k \log^2 (1/\eps))$ bits of space. In the following discussion, this failure probability can be ignored by making the constant $c$ sufficiently large.
\end{remark}

\begin{theorem}
\label{turnstilealgundirected}
  We can simulate a $t$-step random walk on an undirected graph in the turnstile model within error $\eps$ using $O(n(\sqrt{t}+ \log\frac{1}{\eps})\log^2 \max\{n,1/\eps\})$ bits of memory.
\end{theorem}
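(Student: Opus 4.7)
The plan is to adapt the insertion-only algorithm of Section~\ref{secmulti} by replacing the two non-turnstile ingredients with their turnstile analogues: Misra--Gries becomes the $\ell_1$ heavy hitter sketch, and the reservoir samplers for unimportant arcs become $\ell_1$ samplers combined with a linear-sketch subtraction. The simulation procedure \textsc{SimulateRandomWalk} (Figure~\ref{fig:simulation}) can then be re-used verbatim.

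For each vertex $v$, I would feed the arcs $(\cdot,v)$ into an $\ell_1$ heavy hitter with parameter $k = 2(C+1)$, where $C = O(\sqrt{t}\cdot q/\log q)$ and $q = 2+\frac{\log(4t/\eps)}{\sqrt{t}}$ as in Lemma~\ref{multicapa}. This yields $L_v$ of size $O(C)$ and estimates $\tilde f_v(u)$ satisfying $0\le f(u,v)-\tilde f_v(u)\le d(v)/(2k)$ for $u\in L_v$ and $f(u,v)\le d(v)/(2k)$ for $u\notin L_v$. Setting $f_1(u,v):=\tilde f_v(u)$ (with $\tilde f_v(u)=0$ outside $L_v$), we have $f_0(u,v)/d(v)<1/C$, so the hypothesis~(\ref{newnewcond}) of the failure analysis is restored. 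In parallel, for each $v$ I would run $C' = 2C + O(\log(t/\eps))$ independent $\ell_1$ samplers with failure probability $1/2$ on the same stream $(\cdot,v)$. After the stream is exhausted, I would apply the \emph{offline} linear update $-\tilde f_v(u)\cdot e_u$ to each of the $C'$ sketches for every $u\in L_v$; since each sketch is a linear function of the input vector, it now represents $f_0(v,\cdot)\ge 0$, and each non-failed sampler returns an independent index $u$ with probability $f_0(v,u)/d_0(v)$ (up to the $n^{-c}$ or $\eps^c$ additive error, which can be absorbed as noted). These play the role of $a_{v,1},\dots,a_{v,C}$.

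With these ingredients in place, I would invoke \textsc{SimulateRandomWalk}. Correctness has two failure modes that I would bound to $\eps/4$ each. First, exactly as in the proof of Theorem~\ref{turnstilealgdirected}, a Chernoff bound gives that the number of successful $\ell_1$ samplers at any fixed vertex $u$ drops below $C$ with probability at most $\eps/(4t)$; weighting this by $\P[u\in\{v_0,\dots,v_{t-1}\}]$ and summing over $u$ yields total probability at most $(\eps/(4t))\cdot t = \eps/4$ of ever hitting a vertex with too few samplers. Second, by Lemma~\ref{multicapa} applied with $\delta=\eps/(4t)$ and by Lemma~\ref{union}, the probability that some vertex exhausts its $C$ unimportant samples is at most $\eps/4$. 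Adding the two mutually upper-bounded bad events yields an $\ell_1$ error of at most $\eps$. The total space is $n(C+C')\cdot O(\log^2\max\{n,1/\eps\})$ bits, which is $O(n(\sqrt{t}+\log(1/\eps))\log^2\max\{n,1/\eps\})$ as claimed, since $\log t = O(\sqrt{t})$.

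The main obstacle is justifying the sketch-subtraction step: one must argue that the $\ell_1$ samplers still produce samples from the correct distribution after subtracting the random, stream-dependent estimates $\tilde f_v(u)$. This is where I would be careful to use independent randomness for the heavy hitter and for each $\ell_1$ sampler, so that conditioned on any realization of $\tilde f_v$ the sketch $S(f)-\sum_{u\in L_v}\tilde f_v(u)\,S(e_u)=S(f_0)$ is an $\ell_1$-sketch of a nonnegative vector and produces samples with the guaranteed distribution. The non-negativity of $f_0$ is exactly the lower half of the heavy hitter guarantee, so $\|f_0\|_1=d_0(v)$ and the samples are indeed distributed according to $f_0(v,\cdot)/d_0(v)$, as needed by \textsc{SimulateRandomWalk}.
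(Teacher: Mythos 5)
Your proposal matches the paper's own proof: both replace Misra--Gries with the $\ell_1$ heavy hitter sketch and the reservoir samplers with $\ell_1$ samplers, subtract the heavy-hitter estimates $A_v(u)$ (your $\tilde f_v(u)$) from the samplers' linear sketches, reuse \textsc{SimulateRandomWalk} verbatim, and bound the two failure modes by a Chernoff bound plus Lemma~\ref{union} and Lemma~\ref{multicapa}; your discussion of why the offline subtraction is legitimate (independent randomness, nonnegativity of $f_0$) is a welcome elaboration of a step the paper states without comment. One small imprecision: the heavy hitter lemma only guarantees $f(u,v) < d(v)/k$ for $u \notin L_v$ (elements of intermediate size may or may not be in $L$), not $f(u,v) \le d(v)/(2k)$ as you wrote, but your choice $k = 2(C+1)$ still yields $f_0(u,v)/d(v) < 1/C$, and in fact the paper's $k=C$ already suffices.
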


\begin{proof}
Similar to the previous insertion-only algorithm (in Figure~\ref{fig:processinputmulti}), we perform two \textit{arc updates} $((u,v),\Delta),\, ((v,u),\Delta)$ when we read an \textit{edge update} $((u,v),\Delta)$ from the stream.

For every $u \in V$, we run $C' = 2C + 16\log (2t/\eps)$ independent $\ell_1$ samplers each having failure probability $\delta=1/2$, where $C$ is the same constant as in the proof of Lemma~\ref{multicapa} and Theorem~\ref{mainthmmulti}. By Chernoff bound, the probability that less than $C$ samplers succeed is at most $\eps/(2t)$. For every arc update $((u,v),\Delta)$, we send update ($v$, $\Delta$) to $u$'s $\ell_1$ sampler.

In addition, for every $v\in V$, we run $\ell_1$ heavy hitter algorithm with $k= C$. For every arc update $((u,v),\Delta)$, we send update $(u,\Delta)$ to  $v$'s heavy hitter algorithm. In the end, we will get a frequency estimate $A_v(u)$  for every $u \in V$, such that $f(u,v) - \frac{d(v)}{C}\le A_v(u) \le f(u,v)$. We then insert $A_v(u)$ copies of arc $(u,v)$ into $E_1$  (the multiset of important arcs), and send update $(v, -A_v(u))$ to $u$'s $\ell_1$ sampler. Then we use the $\ell_1$ samplers to sample unimportant arcs for every $u$.

As before, we use the procedure \textsc{SimulateRandomWalk} (in Figure~\ref{fig:simulation}) to simulate the random walk. The analysis of the failure probability of the $\ell_1$ samplers is the same as in Theorem~\ref{turnstilealgdirected}. The analysis of the failure probability of procedure \textsc{SimulateRandomWalk} is the same as in Lemma~\ref{multicapa}. The space usage of the algorithm is $O(nC' \log^2 \max\{n,1/\eps\} \log \delta) = O(n(\sqrt{t}+\log \frac{1}{\eps}) \log^2 \max\{n,1/\eps\})$ bits.
\end{proof}

\section{Conclusion}
We end our paper by discussing some related questions for future research.

\begin{itemize}
\item The output distribution of our insertion-only algorithm for undirected graphs is $\eps$-close to the random walk distribution. 
What if the output is required to be perfectly random, i.e., $\eps=0$? 

\item For insertion-only simple undirected graphs, we proved an $\Omega(n\sqrt{t})$-bit space lower bound. Our algorithm uses $O(n\sqrt{t} \log n)$ bits (for not too small $\eps$). Can we close the gap between the lower bound and the upper bound, as in the case of directed graphs? 

\item In the undirected version, suppose the starting vertex $v_0$ is drawn from a distribution (for example, the stationary distribution of the graph) rather than being specified.
Is it possible to obtain a better algorithm in this new setting? Notice that our proof of the $\Omega(n\sqrt{t})$ lower bound does not work here, since it requires $v_0$ to be specified.

\item We required the algorithm to output all vertices on the random walk.
If only the last vertex is required, can we get a better algorithm or prove non-trivial lower bounds?
\end{itemize}



\bibliography{lipics-v2018-sample-article}

\end{document}